\documentclass[format=sigconf, 9pt]{acmart}

\sloppy
\synctex=1
\usepackage{ellipsis, ragged2e} 
\usepackage[l2tabu, orthodox]{nag}


\usepackage{amsmath}
\usepackage{amsthm}
\usepackage{csquotes}
\usepackage{booktabs}
\usepackage{paralist}

\usepackage{bm}
\usepackage{bbm}
\usepackage{bbm}        

\usepackage{array}
\usepackage{tcolorbox}
\usepackage{mathrsfs}
\usepackage{tikz}
\usepackage{graphicx,comment,subcaption}
\usepackage{ifthen}
\usetikzlibrary{calc}
\usepackage{amsmath}
\usepackage{amsthm,amsfonts}
\usepackage[noend]{algpseudocode}
\usepackage{pgfplots}
\pgfplotsset{compat=1.15}
\usepackage{tikz}
\usetikzlibrary{shapes,positioning,arrows,arrows.meta,calc,automata,matrix,fit,backgrounds}
\tikzstyle{state}+=[minimum size = 6mm, inner sep=0,outer sep=1]

\colorlet{disabled}{lightgray}
\tikzstyle{state}=[draw,rectangle,inner sep=5pt,rounded corners=2pt]
\tikzstyle{action}=[font=\small,inner sep=0pt,outer sep=3pt]
\tikzstyle{actionnode}=[circle,draw=black,fill=black,minimum size=1mm,inner sep=0,outer sep=0]
\tikzstyle{actionedge}=[draw,-]
\tikzstyle{prob}=[font=\scriptsize,inner sep=0pt,outer sep=1pt]
\tikzstyle{probedge}=[draw,->]
\tikzstyle{directedge}=[draw,->]
\tikzset{chainarrow/.tip={Stealth[length=3pt]}}
\tikzset{>=chainarrow}

\usepackage{xparse} 
\usepackage{mathtools}
\usepackage{environ}
\usepackage{marvosym}

\usepackage{algorithmicx,algorithm}
\usepackage[noend]{algpseudocode}


\usepackage{soul}
\usepackage{todonotes} 

\usepackage[capitalize]{cleveref}

\newcommand{\B}{\mathcal{B}}

\newcommand{\C}{\mathcal{C}}
\newcommand{\calP}{\mathcal{P}}

\newcommand{\sign}{\text{sign}}
\newcommand{\calO}{\mathcal{O}}
\newcommand{\eps}{\varepsilon}
\newcommand{\defas}{\coloneqq}
\newcommand{\sadef}{\eqqcolon}

\newcommand{\ZZ}{\mathbb{Z}}

\newcommand{\RR}{\mathbb{R}}

\newcommand{\val}{\text{val}}

\newcommand{\Vertices}{V} 
\newcommand{\vertex}{v} 
\newcommand{\otherver}{u} 
\newcommand{\Edges}{E}

\newcommand{\reward}{r}
\newcommand{\discfac}{\lambda}
\newcommand{\discounted}{\text{Disc}}
\newcommand{\Plays}{\Omega} 
\newcommand{\play}{\omega} 

\newcommand{\strategyone}{\sigma}
\newcommand{\strategytwo}{\tau}

\newcommand{\Strategyone}{\Sigma}
\newcommand{\Strategytwo}{\Gamma}

\newcommand{\DSCV}{{\sc DiscVal}}
\newcommand{\TBGB}{{\sc DiscVal-Bin}}
\newcommand{\TBGD}{{\sc DiscVal-Dun}}
\newcommand{\TBGW}{{\sc DiscVal-Wun}}
\newcommand{\SI}{{\sc SI}}
\newcommand{\VI}{{\sc VI}}
\newcommand{\ROOTPOLY}{{\sc Root-Poly}}

\newtheorem{@theorem}{Theorem}[section]

\theoremstyle{remark} 

\newtheorem{remark}[@theorem]{Remark}

\Crefname{equation}{Eqn.}{Eqns.}

\author{Ali Asadi}
\affiliation{%
   \institution{ISTA}
   \city{Klosterneuburg}
   \country{Austria}}
\email{ali.asadi@ist.ac.at}

\author{Krishnendu Chatterjee}
\affiliation{%
   \institution{ISTA}
   \city{Klosterneuburg}
   \country{Austria}}
\email{krishnendu.chatterjee@ist.ac.at}

\author{Raimundo Saona}
\affiliation{%
   \institution{ISTA}
   \city{Klosterneuburg}
   \country{Austria}}
\email{raimundojulian.saonaurmeneta@ist.ac.at}

\author{Jakub Svoboda}
\affiliation{%
   \institution{ISTA}
   \city{Klosterneuburg}
   \country{Austria}}
\email{jakub.svoboda@ist.ac.at}

\title{Deterministic Sub-exponential Algorithm for Discounted-sum Games with Unary Weights}
\acmConference[LICS'24]{Logic in Computer Science}{July 8--12 July, 2024}{Tallinn, Estonia}

\begin{document}

\begin{abstract}
  Turn-based discounted-sum games are two-player zero-sum games played on finite directed graphs.
  The vertices of the graph are partitioned between player~1 and player~2.
  Plays are infinite walks on the graph where the next vertex is decided by a player that owns the current vertex.
  Each edge is assigned an integer weight and the payoff of a play is the discounted-sum of the weights of the play.
  The goal of player~1 is to maximize the discounted-sum payoff against the adversarial player~2.
  These games lie in NP $\cap$ coNP and are among the rare combinatorial problems that
  belong to this complexity class and the existence of a polynomial-time algorithm is a major open question.
  Since breaking the general exponential barrier has been a challenging problem,
  faster parameterized algorithms have been considered.
  If the discount factor is expressed in unary, then discounted-sum games can be solved
  in polynomial time.
  However, if the discount factor is arbitrary (or expressed in binary), but the weights
  are in unary, none of the existing approaches yield a sub-exponential bound.
  Our main result is a new analysis technique for a classical algorithm (namely,
  the strategy iteration algorithm) that present a new runtime bound which is
  $n^{\calO \left ( W^{1/4} \sqrt{n}  \right )}$, for game graphs with $n$ vertices and absolute weights of at most $W$.
  In particular, our result yields a deterministic sub-exponential bound for games with 
  weights that are constant or represented in unary.
\end{abstract}

\maketitle
\section{Introduction}

\smallskip\noindent{\em Turn-based discounted-sum games.}
Turn-based graph games~\cite{GH82,chandra1981alternation} are two-player infinite-duration zero-sum games played on a finite directed graph.
The vertex set is partitioned into player-1 and player-2 vertices.
At player-1 and player-2 vertices, the respective player chooses a successor vertex.
Given an initial vertex, the repeated interaction between the players generates an 
infinite walk (called \emph{play}) in the graph.
Strategies (or policies) for players provide the successor vertex choice at every player vertex.
A payoff function assigns a real value to every play. We consider a classical and well-studied
function: the discounted-sum payoff function~\cite{shapley1953StochasticGames,FV97,Puterman}.
Every edge of the graph is assigned an integer weight, and the payoff of a play
is the discounted-sum of the weights of the play.

\smallskip\noindent{\em Value problem: complexity and algorithm.}
The value at a vertex of a discounted-sum game is the maximal payoff that player~1 can ensure 
irrespective of the strategy choice of player~2.
The decision problem associated with the value (i.e., whether the value of a vertex is at 
least a given threshold) lies in NP $\cap$ coNP
(even UP $\cap$ coUP)~\cite{zwick1996complexity,condon1992ComplexityStochasticGames,CF11}.
This is among the rare combinatorial problems that lie in NP $\cap$ coNP and the existence of 
a polynomial-time algorithm is a major and long-standing open problem.
The classical algorithmic approaches to compute the values of a discounted-sum game are:
(a)~strategy iteration (or policy improvement)~\cite{FV97}; and
(b)~value iteration~\cite{zwick1996complexity,bansal12651satisficing}.
The best-known worst-case bounds for these algorithms are exponential time.
There is a randomized sub-exponential time algorithm to compute values~\cite{Lud95} with
the expected running time of $2^{\widetilde{\calO}(\sqrt{n})}$, where $n$ is the number of vertices,
and $\widetilde{\calO}(\cdot)$ hides polylogarithmic factors.

\smallskip\noindent{\em Related game problems on directed graphs.} 
There are interesting related problems for games on directed graphs,
namely, {\em parity games}~\cite{EJ91} and {\em mean-payoff games}~\cite{EM79,GKK88,zwick1996complexity}.
Parity games and mean-payoff games also lie in NP $\cap$ coNP~\cite{EJ91} (even UP $\cap$ coUP~\cite{Jur98}),
and the existence of polynomial-time algorithms are major open problems.
Parity games admit linear-time reduction to mean-payoff games~\cite{Jur98} and 
mean-payoff games admit linear-time reduction to discounted-sum games~\cite{zwick1996complexity}.
However, reductions in the converse direction are not known.
For parity games, several important algorithmic improvements have been achieved.
In particular, deterministic sub-exponential time algorithm~\cite{JPZ08} and deterministic quasi-polynomial 
time algorithm~\cite{CJKLS22}, that break the existing exponential-time barrier.
No similar algorithmic improvements have been achieved for mean-payoff games and discounted-sum games.
Indeed, no deterministic sub-exponential time algorithm is known for mean-payoff or discounted-sum games.

\smallskip\noindent{\em Parameterized algorithms and open problem.} 
Given that the algorithmic improvements for discounted-sum games have been rare, it is natural
to consider faster parameterized algorithms. 
The natural restriction to consider is the representation of the numbers related to discounted-sum
games.
There are two sources of numbers related to discounted-sum games: (a)~the weights; and (b)~the discount factor.
The results of~\cite{hansen2013strategy} establish, that if the discount factor is constant, 
then the strategy iteration algorithm runs in polynomial time (this result holds even in games with 
stochastic transitions~\cite{hansen2013strategy}).
However, when the discount factor is arbitrary but the weight function is expressed in unary,
no better bound than the exponential bound is known, and the existence of a deterministic sub-exponential 
algorithm is an important open problem.

\smallskip\noindent{\em Motivation.} 
While the problem of discounted-sum games with unary weights is theoretically interesting, there are practical motivations as well. 
Game graphs are models of reactive systems, where vertices represent states of the system, edges represent 
transitions, and players represent agents controlling different transitions. 
In analysis of reactive systems, small/constant weights are natural, e.g., when there are good and bad events
and weights represent the relative importance of the good and bad events~\cite{chatterjee2015quantitative,chatterjee2008logical}.
Since discounted-sum objectives are studied in reactive systems analysis~\cite{de2003discounting}, 
improved algorithm for such games are of practical relevance along with their theoretical importance.

\smallskip\noindent{\em Our result.}
In this work, our main result answers the above open question.
We present an improved analysis of the strategy iteration algorithm and show that, 
given a discounted-sum game with $n$ vertices and absolute
weights at most $W$, the running time is $n^{\calO \left (W^{1/4}\sqrt{n} \right )}$.
Hence, if the weights are constant or represented in unary, then the 
algorithm is a deterministic sub-exponential time algorithm.

\smallskip\noindent{\em Technical contributions.} 
We first present new bounds on the roots of polynomials with bounded integer coefficients. 
We show a sub-exponential lower bound and two upper bounds for the roots of polynomials: (i) a non-constructive sub-exponential upper bound; and (ii) an explicit quasi-polynomial upper bound. Our key insight is to associate all strategy profiles in discounted-sum games with rational polynomial functions with bounded integer coefficients. This main technical contribution establishes a connection between the lower bound for the polynomials and the running time analysis of the strategy iteration algorithm. To the best of our knowledge, such a connection has not been established before. Given the connection and our bounds on the roots of polynomials, we establish an improved running time analysis of the strategy iteration algorithm for discounted-sum games. \Cref{section: polynomials} presents the results on bounds on the roots of polynomials and \Cref{section: si} presents the key insights of analysis of strategy iteration algorithm using results of \Cref{section: polynomials}.

\smallskip\noindent{\em Related works.}
The algorithmic study of discounted-sum, mean-payoff, and parity games
have received significant attention. Below we summarize some related works.

\begin{compactitem}

\item {\em Parity games.} 
There has been a significant progress in the study of parity games.
While the classical algorithms~\cite{EJ91,Zie98} has exponential worst-case complexity,
faster exponential-time algorithms were achieved~\cite{Jur00,Schewe07}, and 
then deterministic sub-exponential~\cite{JPZ08} and deterministic quasi-polynomial time algorithms~\cite{CJKLS22} were obtained.
However, extending the algorithmic bounds from parity games to mean-payoff or discounted-sum 
games has been a major open question.

\item {\em Mean-payoff games.}
The algorithmic aspects of mean-payoff games has also been studied in several 
works~\cite{EM79,GKK88,zwick1996complexity,BCDGR11,DKZ19}.
All these algorithms have an exponential worst-case complexity. However, the classical
value iteration algorithm is pseudo-polynomial and runs in polynomial time if 
the weights are expressed in unary.

\item {\em Discounted-sum games.}
The value iteration algorithm for discounted-sum games has been studied in~\cite{zwick1996complexity,bansal12651satisficing},
and the strategy iteration algorithm has been studied in~\cite{hansen2013strategy}.
While the worst-case bound for the algorithms is exponential, if the discount-factor is 
constant, these algorithms run in polynomial time~\cite{hansen2013strategy,bansal12651satisficing}.
The value iteration and strategy iteration algorithms have an explicit dependence
on the discount factor.
An algorithm that does not depend on the discount factor is presented in~\cite{kozachinskiy2021polyhedral},
which is inspired by the algorithm of~\cite{DKZ19} for mean-payoff games.
This algorithm has a complexity of $\calO \left ((2+\sqrt{2})^n\right )$, and is exponential 
without any dependence on the weights or discount factor.

\item {\em Stochastic games.}
Discounted-sum games admit a linear reduction to stochastic games with reachability objectives~\cite{zwick1996complexity},
which are games with stochastic transitions.
The algorithmic study of stochastic games has been considered in several works~\cite{condon1990AlgorithmsSimpleStochastic,condon1992ComplexityStochasticGames,Lud95}.
However, even stochastic games with~0 and~1 weights are as hard as stochastic games
with general weights~\cite{condon1992ComplexityStochasticGames,andersson2009complexity}, 
i.e., 
parameterization by the weights is not useful. 

\end{compactitem}
In summary, none of the existing approaches break the long-standing exponential barrier for 
discounted-sum games with weights in unary and we present the first deterministic sub-exponential
bound.

\section{Preliminaries}
\label{Section: Preliminaries}

We present standard notations and definitions related to turn-based games, similar to~\cite{Zie98,chatterjee2014efficient}.

\smallskip\noindent{\bf Turn-based games.} 
A turn-based game (TBG) is a two-player finite game 
$G = (\Vertices = \Vertices_1 \sqcup \Vertices_2, \Edges)$ consisting of a finite graph with
\begin{compactitem}
    \item the set of vertices $\Vertices$, of size $n$, partitioned into player-1 vertices $\Vertices_1$ and player-2 vertices $\Vertices_2$; and
    \item the set of edges $\Edges \subseteq \Vertices \times \Vertices$, of size $m$, such that for all $\vertex \in \Vertices$, the set $\Edges(\vertex) \defas \{ \otherver \mid (\vertex, \otherver) \in \Edges \}$ is non-empty.
\end{compactitem}

\smallskip\noindent{\bf Steps and plays.} 
Given an initial vertex $\vertex_0 \in \Vertices$, the game proceeds as follows.
In each step, the player who owns the current vertex $\vertex$ chooses the next vertex from the set $\Edges(\vertex)$.
A \emph{play} is an infinite sequence of vertices $\play = \langle \vertex_0, \vertex_1, \ldots \rangle$
such that, for every step $t \ge 0$, the vertex $\vertex_{t+1} \in \Edges(\vertex_t)$. We denote by $\Plays$ 
the set of all plays, and by $\Plays_\vertex$ the set of all plays $\play = \langle \vertex_0, \vertex_1, \ldots \rangle$
where $\vertex_0 = \vertex$.

\smallskip\noindent{\bf Discounted-payoff objectives.} 
We consider TBGs with a weight or reward function $\reward \colon \Edges \to \ZZ$ that assigns a reward value $\reward(\vertex, \otherver)$ for all edges $(\vertex, \otherver) \in \Edges$.
We denote the largest absolute reward by $W \defas \max \{ |\reward(\vertex, \otherver)|  \mid  (\vertex, \otherver) \in \Edges \}$.
For a play $\play = \langle \vertex_0, \vertex_1, \ldots \rangle$ and a discount factor $\discfac \in [0, 1)$, the discounted-payoff (or simply payoff) is denoted by $\discounted_\discfac(\play) \defas \sum_{i \ge 0} \discfac^i \reward(\vertex_i, \vertex_{i+1})$.
The objective of player~1 is to maximize the payoff, while player~2 minimizes the payoff.

\smallskip\noindent{\bf Positional strategies.}
Strategies are recipes that specify how to choose the next vertex.
A {\em positional}  strategy $\strategyone \colon \Vertices_1 \to \Vertices$ for player~1 
(resp. $\strategytwo \colon \Vertices_2 \to \Vertices$ for player $2$) is a strategy which chooses
a vertex~$\strategyone(\vertex) \in \Edges(\vertex)$ whenever the play visits vertex~$\vertex$. 
A strategy profile~$(\strategyone, \strategytwo)$ is a pair of strategies for both players.
We denote by $\Strategyone^P$ and $\Strategytwo^P$ the set of all positional strategies for player~1 and player~2, respectively.
In general, strategies can depend on past history and not only the current vertex. 
However, for discounted-sum objective, positional strategies are as powerful as general strategies~\cite{condon1992ComplexityStochasticGames}.
Hence, in the sequel, every strategy is positional.

\smallskip\noindent{\bf Lasso-shaped plays given strategies in TBGs.}
We define $G^\strategyone$ as the restricted game where player~1 follows the strategy $\strategyone$.
We define $G^\strategytwo$ and $G^{\strategyone, \strategytwo}$ similarly.
Note that once both players have fixed their strategies we obtain a graph where each vertex has exactly one outgoing edge.
Given an initial vertex $\vertex$, we obtain a play $G_\vertex^{\strategyone, \strategytwo} = \langle \vertex_0, \vertex_1, \ldots \rangle$ such that $\vertex_0 = \vertex$, and for any step $t \ge 0$, $\vertex_{t+1} = \strategyone(\vertex_t)$ if $\vertex_t \in \Vertices_1$; and $\vertex_{t+1} = \strategytwo(\vertex_t)$ otherwise.
In other words, given strategies $\strategyone$ and $\strategytwo$, the obtained play $G_\vertex^{\strategyone, \strategytwo}$ is a {\em lasso-shaped}  play that 
consists in a finite cycle-free  path $\calP \defas \langle \vertex_0, \ldots, \vertex_{p-1} \rangle$ followed by a simple cycle $\C \defas \langle \vertex_p, \ldots, \vertex_{p + c - 1} \rangle$
repeated forever. 

\smallskip\noindent{\em Notation.} For simplicity, we denote by $\discounted_\discfac(G^{\strategyone, \strategytwo})$ a vector whose $\vertex$-th coordinate is the discounted payoff for vertex $\vertex$, i.e., $\discounted_\discfac(G_\vertex^{\strategyone, \strategytwo})$.

We recall the fundamental determinacy in positional strategies for TBGs with discounted-payoff objectives. 

\begin{theorem}[\cite{condon1992ComplexityStochasticGames}]
\label{Result: determinacy of tbgs}
For all TBGs $G$, vertices $\vertex$, reward functions, and discount factors $\discfac \in [0, 1)$, we have 
\[
\max_{\strategyone \in \Strategyone^P} \min_{\strategytwo \in \Strategytwo^P} \discounted_\discfac(G_{\vertex}^{\strategyone, \strategytwo}) =
\min_{\strategytwo \in \Strategytwo^P} \max_{\strategyone \in \Strategyone^P}  \discounted_\discfac(G_{\vertex}^{\strategyone, \strategytwo}) \,.
\]
\end{theorem}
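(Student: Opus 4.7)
The plan is to identify both sides of the claimed equality with the unique fixed point of the Bellman operator $T \colon \RR^\Vertices \to \RR^\Vertices$ defined coordinate-wise by $(Tx)_\vertex = \max_{\otherver \in \Edges(\vertex)} (\reward(\vertex, \otherver) + \discfac \cdot x_\otherver)$ for $\vertex \in \Vertices_1$ and $(Tx)_\vertex = \min_{\otherver \in \Edges(\vertex)} (\reward(\vertex, \otherver) + \discfac \cdot x_\otherver)$ for $\vertex \in \Vertices_2$. Since $\discfac \in [0,1)$, the operator $T$ is a $\discfac$-contraction in sup-norm (this follows from the elementary fact that $|\max_i a_i - \max_i b_i| \le \max_i |a_i - b_i|$ and likewise for $\min$), so by Banach's fixed-point theorem there is a unique $x^\star \in \RR^\Vertices$ with $Tx^\star = x^\star$.

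Next I would extract positional strategies directly from the Bellman equation: at $\vertex \in \Vertices_1$ pick $\strategyone^\star(\vertex) \in \arg\max_{\otherver \in \Edges(\vertex)}(\reward(\vertex, \otherver) + \discfac \cdot x^\star_\otherver)$, and symmetrically at $\vertex \in \Vertices_2$ pick $\strategytwo^\star(\vertex) \in \arg\min_{\otherver \in \Edges(\vertex)}(\reward(\vertex, \otherver) + \discfac \cdot x^\star_\otherver)$. It then suffices to establish three claims: (i) $\discounted_\discfac(G^{\strategyone^\star, \strategytwo^\star}_\vertex) = x^\star_\vertex$; (ii) $\discounted_\discfac(G^{\strategyone^\star, \strategytwo}_\vertex) \ge x^\star_\vertex$ for every $\strategytwo \in \Strategytwo^P$; and (iii) $\discounted_\discfac(G^{\strategyone, \strategytwo^\star}_\vertex) \le x^\star_\vertex$ for every $\strategyone \in \Strategyone^P$. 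Chaining these gives $\max_\strategyone \min_\strategytwo \discounted_\discfac \ge \min_\strategytwo \discounted_\discfac(G^{\strategyone^\star,\strategytwo}_\vertex) \ge x^\star_\vertex \ge \max_\strategyone \discounted_\discfac(G^{\strategyone, \strategytwo^\star}_\vertex) \ge \min_\strategytwo \max_\strategyone \discounted_\discfac$, which combined with the trivial reverse inequality $\max\min \le \min\max$ yields the equality.

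Each of (i)--(iii) is proved by telescoping the Bellman (in)equality along the induced play $\langle \vertex_0, \vertex_1, \ldots \rangle$. By definition of $\strategyone^\star, \strategytwo^\star$ one has $x^\star_{\vertex_t} = \reward(\vertex_t, \vertex_{t+1}) + \discfac \cdot x^\star_{\vertex_{t+1}}$ in case (i), and the analogous $\ge$ or $\le$ relations in cases (ii) and (iii). Unrolling over $N$ steps gives $x^\star_\vertex = \sum_{i=0}^{N-1} \discfac^i \reward(\vertex_i, \vertex_{i+1}) + \discfac^N x^\star_{\vertex_N}$ (or the corresponding inequality). Since the finite state space forces $\{x^\star_{\vertex_N}\}$ to be bounded and $\discfac^N \to 0$, taking $N \to \infty$ yields the claim.

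The main subtlety is that the theorem quantifies over positional strategies only, so I must argue that a best response to a positional opponent can itself be taken positional. This is where the telescoping argument pays off: it in fact proves (ii) and (iii) against arbitrary history-dependent opponents, since Bellman optimality at each vertex is all that is used. Equivalently, once one player fixes a positional strategy, the resulting one-player discounted optimization problem on the induced digraph $G^{\strategyone^\star}$ (or $G^{\strategytwo^\star}$) is a finite discounted Markov decision process, for which positional optimality is standard.
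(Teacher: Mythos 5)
The paper does not prove this theorem---it is stated as a cited result from Condon (1992), so there is no internal proof to compare against. Your blind proof is correct and complete: it is the standard Shapley/Bellman argument. The contraction estimate is right (finite $\max$/$\min$ are $1$-Lipschitz in sup-norm, so $T$ is a $\discfac$-contraction on the complete space $(\RR^\Vertices, \|\cdot\|_\infty)$), the extraction of positional $\strategyone^\star, \strategytwo^\star$ from the argmax/argmin at the fixed point is standard, and the telescoping of the Bellman (in)equalities along the generated play with $\discfac^N x^\star_{\vertex_N} \to 0$ correctly establishes claims (i)--(iii). Your chaining $\max_\strategyone\min_\strategytwo \ge \min_\strategytwo \discounted_\discfac(G^{\strategyone^\star,\strategytwo}_\vertex) \ge x^\star_\vertex \ge \max_\strategyone \discounted_\discfac(G^{\strategyone,\strategytwo^\star}_\vertex) \ge \min_\strategytwo\max_\strategyone$ together with the trivial reverse inequality closes the argument. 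You also correctly flag, and handle, the only real subtlety: the theorem quantifies over positional strategies, and your telescoping argument in fact proves the bounds against arbitrary opponent strategies (a strictly stronger statement), so restricting both quantifiers to positional strategies is immediate. Nothing is missing.
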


\smallskip\noindent{\bf Value and optimal strategies.}
\Cref{Result: determinacy of tbgs} implies that switching the quantification order of positional strategies does not make 
a difference and leads to the unique notion of value, defined for a vertex $\vertex$ as 
\[
    \val_\discfac(\vertex) \defas \max_{\strategyone \in \Strategyone^P} \min_{\strategytwo \in \Strategytwo^P} \discounted_\discfac(G_{\vertex}^{\strategyone, \strategytwo}) \,.
\]
A strategy $\strategyone$ for player~1 is \emph{optimal} if, for all vertices $\vertex \in \Vertices$, we have that
\[
    \min_{\strategytwo \in \Strategytwo^P} \discounted_\discfac(G_\vertex^{\strategyone, \strategytwo}) = \val_\discfac(\vertex)\,.
\]
The notion of optimal strategies for player~2 is defined analogously.
Optimal strategies are guaranteed to exist for both players (\Cref{Result: determinacy of tbgs}).
Therefore, restricting the attention to positional strategies does not change the notion of value. 

\smallskip\noindent{\bf Value problem.}
The value problem for turn-based discounted-sum games is defined as follows
\begin{tcolorbox}
    \smallskip\noindent{\bf \DSCV.} Given a game $G$, a reward function $\reward$, and discount factor $\discfac$, compute the value function $\val_\discfac$.
\end{tcolorbox}

\smallskip\noindent{\bf Three variants.} 
There are three variants of the \DSCV\ problem with respect to
the representation of $\reward$ and $\discfac$.
\begin{compactitem}
    \item \TBGB: Both the reward function $\reward$ and the discount factor $\discfac$ are given in binary.
    \item \TBGD: The reward function $\reward$ is given in binary but the discount factor $\discfac$ is given in unary.
    \item \TBGW: The reward function $\reward$ is given in unary but the discount factor  $\discfac$ is given in binary.
\end{compactitem}

\section{Overview of Results}
We discuss previous results from the literature and present our main result.

\noindent{\bf Previous results.} 
The two classical algorithms for \DSCV\ are: (a)~Value Iteration (\VI); and (b)~Strategy Iteration (\SI). 
Both algorithms are iterative algorithms and the running time is a product of two factors: (i)~the number of
iterations and (ii)~the complexity of every iteration.
For both algorithms, the running time of every iteration is polynomial: (a)~$\calO(m)$ for \VI; and (b)~$\calO(mn^2\log m)$ for \SI~\cite{andersson2006improved}.
The bounds on the number of iterations are as follows.

\begin{theorem}
\label{Result: time complexity of SI for TBG-BIN}
    The following assertions hold:
    \begin{compactitem}
        \item The \VI\ algorithm solves \DSCV\ with $\calO \left (\frac{\log(W)}{1 - \discfac} + n \right )$ iterations~\cite{zwick1996complexity,bansal12651satisficing}.
        \item The \SI\ algorithm solves \DSCV\ with $\calO \left (\frac{m}{1 - \discfac} \log \frac{n}{1 - \discfac} \right )$ iterations~\cite{hansen2013strategy}.
    \end{compactitem}
\end{theorem}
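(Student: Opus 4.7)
The plan is to derive both bounds from the $\discfac$-contraction of the Bellman operator together with a separation bound on the set of values realizable by positional strategy profiles in $G$.

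For \VI, I would define the Bellman operator $T\colon \RR^\Vertices \to \RR^\Vertices$ by
\[
T(f)(\vertex) = \begin{cases} \max_{\otherver \in \Edges(\vertex)} \bigl( \reward(\vertex, \otherver) + \discfac\, f(\otherver) \bigr), & \vertex \in \Vertices_1, \\ \min_{\otherver \in \Edges(\vertex)} \bigl( \reward(\vertex, \otherver) + \discfac\, f(\otherver) \bigr), & \vertex \in \Vertices_2, \end{cases}
\]
and check that $T$ is a $\discfac$-contraction in $\|\cdot\|_\infty$ whose unique fixed point is $\val_\discfac$. Starting from $f_0 \equiv 0$, after $k$ iterations $\|T^k(f_0) - \val_\discfac\|_\infty \le \discfac^k \cdot W/(1-\discfac)$. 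To convert an approximation into the exact answer, I use \Cref{Result: determinacy of tbgs}: each coordinate of $\val_\discfac$ is the payoff of some lasso-shaped play and hence a rational of the form $P(\discfac)/(1-\discfac^c)$ with $P$ of degree at most $n$ and integer coefficients bounded by $W$; two distinct such rationals differ by at least $\Omega\bigl((1-\discfac)\discfac^n/\mathrm{poly}(n, W)\bigr)$. Snapping $T^k(f_0)$ to the nearest lasso value therefore recovers $\val_\discfac$ as soon as $\discfac^k \cdot W/(1-\discfac)$ drops below this gap. Since $\log(1/\discfac) = \Theta(1-\discfac)$ near $\discfac=1$, the resulting bound is $k = \calO(\log(W)/(1-\discfac) + n)$.

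For \SI, I would follow the potential-function analysis of Hansen--Miltersen--Zwick. Each iteration maintains a player-$1$ strategy $\strategyone$, computes the best response $\strategytwo$ of player~$2$, and updates $\strategyone$ by switching at every vertex where a local change strictly improves the value vector. Standard arguments show that the resulting sequence of value vectors is coordinatewise strictly increasing, so strategies never repeat and termination is immediate from $|\Strategyone^P| < \infty$. For the quantitative bound, I would track a potential equal to a weighted logarithm of the current gap $\val_\discfac - \discounted_\discfac(G^{\strategyone, \strategytwo})$. The key lemma is that the $\discfac$-contraction forces each iteration to shrink this potential by at least $\Omega((1-\discfac)/m)$, while the initial potential is $\calO(\log(n/(1-\discfac)))$; combining yields the stated $\calO\bigl(\frac{m}{1-\discfac} \log \frac{n}{1-\discfac}\bigr)$ iteration bound.

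The main technical obstacle is, for \VI, the clean derivation of the separation bound, which requires uniform control over the denominators $1 - \discfac^c$ for cycle lengths $c \le n$ and a careful estimate for the numerators of differences of lasso payoffs; and, for \SI, the specific choice of potential that extracts the logarithmic factor rather than a polynomial one. The latter is the technical core of the Hansen--Miltersen--Zwick contribution and is far from a one-line argument.
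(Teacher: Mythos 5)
The paper offers no proof of this theorem: it is stated as a recap of prior results and supported only by the citations to the referenced papers. There is therefore no in-paper argument to compare your sketch against.

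On its own terms, your VI sketch has a genuine gap at the separation bound. You assert that two distinct lasso payoffs differ by at least $\Omega\bigl((1-\lambda)\lambda^n/\mathrm{poly}(n,W)\bigr)$, uniformly in $\lambda$. This cannot hold for arbitrary $\lambda\in(0,1)$: writing the two payoffs as $P_1(\lambda)/Q_1(\lambda)$ and $P_2(\lambda)/Q_2(\lambda)$ with $Q_i=1-\lambda^{c_i}$, their difference has numerator $F(\lambda)=(P_1Q_2-P_2Q_1)(\lambda)$, a nonzero polynomial of degree at most $2n$ with integer coefficients of magnitude $\calO(W)$. Such a polynomial can vanish inside $(0,1)$ (e.g.\ $F(x)=1-2x$), and $\lambda$ can be taken arbitrarily close to such a root while $F(\lambda)\neq 0$ is arbitrarily small, so no separation bound that ignores the representation of $\lambda$ can be correct. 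The standard repair is to take $\lambda=a/b$ in lowest terms, giving $|F(\lambda)|\ge b^{-2n}$ whenever $F(\lambda)\neq 0$, but then the derived iteration count acquires an extra $\Theta\bigl(n\log b/(1-\lambda)\bigr)$ term not present in the stated bound, so one needs either a different target (approximation or threshold rather than exact value) or an additional idea. Notice that the paper's entire technical contribution in \Cref{section: polynomials} is a lower bound on roots of exactly such difference polynomials, but only \emph{near} $\lambda=1$ --- precisely because no uniform bound away from $1$ exists. Finally, your SI sketch is plausible in outline, but the posited $\Omega\bigl((1-\lambda)/m\bigr)$ per-iteration potential drop does not reflect the structure of the Hansen--Miltersen--Zwick argument: their proof shows that within $\calO\bigl(\tfrac{1}{1-\lambda}\log\tfrac{n}{1-\lambda}\bigr)$ iterations at least one edge is permanently discarded, and the factor $m$ arises because this can happen at most $m$ times, not from a uniform per-step decrease; the per-step drop you hypothesize would itself need justification.
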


\begin{remark}[Implications]
    We discuss the implications of \Cref{Result: time complexity of SI for TBG-BIN} for the variants of \DSCV.
    \begin{compactitem}
        \item For \TBGB, the above running times for \VI\ and \SI\ are exponential. 
        \item For \TBGD, the above result shows that \VI\ and \SI\ run in polynomial-time. Moreover, even for stochastic games \TBGD\ has polynomial-time upper bound~\cite{hansen2013strategy}.
        \item For \TBGW, the current bounds for the above and other existing algorithms do not break the exponential-time barrier.
    \end{compactitem}
\end{remark}
		
\smallskip\noindent{\em Lower bounds for \SI.} 
Lower bounds for \SI\ have been an active research topic.	
Exponential lower bound for \SI\ for parity games was established in~\cite{friedman2009super}, which was extended to other settings (such as randomized pivoting algorithm)~\cite{friedmann2011subexponential}.
Moreover, some complexity hardness result has also been established for \SI\ (e.g., the decision problem of whether \SI\ modifies an edge 
is known to be PSPACE-complete)~\cite{fearnley2015complexity}.

\smallskip\noindent{\bf Our result.}
In this work, in contrast to several lower bound results for \SI\ in the literature, we present an improved running time analysis 
for \SI. 
Our main result for \SI\ for \DSCV\ is as follows.

\begin{tcolorbox}
    \begin{theorem}[Main result]
    \label{Result: time complexity of SI for TBG-WUN}
    The \SI\ algorithm solves \DSCV\ with $n^{\calO \left ( W^{1/4} \sqrt{n}  \right )}$ iterations.
    \end{theorem}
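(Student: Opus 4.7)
The plan is to reduce the iteration count of \SI\ to a counting problem on rational functions with bounded integer coefficients, and then apply the root bounds developed in \Cref{section: polynomials}.

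First, I would associate to every strategy profile $(\strategyone, \strategytwo)$ and every initial vertex $\vertex$ a rational function of the discount factor. Since the play $G_\vertex^{\strategyone, \strategytwo}$ is lasso-shaped with a cycle-free prefix of length $p$ followed by a simple cycle of length $c$ (with $p + c \le n$), its discounted payoff admits the closed form
\[
\discounted_\discfac(G_\vertex^{\strategyone, \strategytwo}) \;=\; \frac{A(\discfac)(1 - \discfac^{c}) + \discfac^{p} B(\discfac)}{1 - \discfac^{c}} \,,
\]
where $A$ and $B$ are integer polynomials of degree at most $n$ with coefficients bounded in absolute value by $W$. Consequently, the difference of any two such rational functions (for a fixed $\vertex$) equals a rational function whose numerator is a non-zero integer polynomial in $\discfac$ of degree $\calO(n)$ and coefficient height $(nW)^{\calO(1)}$.

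Second, I would invoke the standard structural analysis of \SI: along the sequence of player-1 strategies $\strategyone_0, \strategyone_1, \ldots$ produced by \SI, together with the corresponding player-2 best responses $\strategytwo_i$, the value vectors $\discounted_\discfac(G^{\strategyone_i, \strategytwo_i})$ form a strictly increasing chain in the componentwise partial order on $\mathbb{Q}^{n}$. Hence the iteration count is bounded by the longest such chain that can be formed from rational functions of the shape above, all evaluated at the same $\discfac$.

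Third, I would lift this to a scalar counting problem and apply the results of \Cref{section: polynomials}. For each vertex $\vertex$ the monotone chain yields a strictly increasing sequence of real numbers $f_i(\discfac)$ whose pairwise differences have numerators in the polynomial class analyzed in \Cref{section: polynomials}. The root-separation bounds there — specifically the non-constructive sub-exponential bound together with the explicit quasi-polynomial bound — imply a quantitative lower bound on $|f_i(\discfac) - f_j(\discfac)|$ whenever $f_i \neq f_j$. Packing such gaps inside the interval of possible payoffs then limits the chain length to $n^{\calO(W^{1/4}\sqrt{n})}$.

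The main obstacle is this last step. The unusual exponent $W^{1/4}\sqrt{n}$ emerges from balancing the two root-bound regimes developed in \Cref{section: polynomials}, and the delicate point is to thread these bounds through the \SI\ analysis while keeping the dependence on $W$ tight and, crucially, eliminating the dependence on $\discfac$, in contrast to the classical bound of~\cite{hansen2013strategy}. A secondary, bookkeeping-level obstacle is to check that clearing denominators in the first step does not inflate the polynomial height beyond $(nW)^{\calO(1)}$, so that the hypotheses of the polynomial bounds from \Cref{section: polynomials} are genuinely satisfied.
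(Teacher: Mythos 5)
Your first two steps match the paper's setup: associating each strategy profile and vertex with a rational function of $\discfac$ whose numerator and denominator are integer polynomials of degree $\calO(n)$ and height $\calO(W)$, and noting the monotone increase of the value vector along the \SI\ iterations. The third step, however, is where the argument breaks, and it diverges from what the paper actually does.

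The packing argument you sketch needs a quantitative lower bound on $|f_i(\discfac) - f_j(\discfac)|$ at the given $\discfac$. But the root bound of \Cref{Result: bound on the roots of polynomials}-(\ref{Result: upper bound on the roots of polynomials}) only tells you that the numerator polynomial $F = P_1Q_2 - P_2Q_1$ has no real root in a window $[\discfac_0, 1)$ of width $\eps \defas 1-\discfac_0 \approx n^{-\calO(W^{1/4}\sqrt{n})}$; it says nothing directly about how small $|F(\discfac)|$ can be for $\discfac$ in that window. If $F$ has a root of order $k$ at $1$, then $|F(\discfac)|$ decays like $(1-\discfac)^k$ as $\discfac \to 1$, with $k$ as large as $\calO(W^{1/4}\sqrt{n})$ by \Cref{Result: bound on order of root at 1}. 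Even after dividing by the two denominators (which contribute roughly $(1-\discfac)^2$), the gap shrinks like $(1-\discfac)^{k-2}$, while the ambient range of payoffs is only $\calO(W/(1-\discfac))$. Evaluating at $\discfac_0$ and packing gives a chain length of order $\eps^{-(k-1)} = n^{\calO\left((W^{1/4}\sqrt{n})^2\right)} = n^{\calO(W^{1/2} n)}$, which is worse than the target and in fact worse than the trivial $n^{\calO(n)}$ bound. Worse still, for $\discfac > \discfac_0$ the gap shrinks further, so the packing bound is not even uniform in $\discfac$ — the very dependence you identify as the obstacle is not eliminated by this route. A secondary confusion: you invoke the non-constructive and constructive \emph{upper} bounds of \Cref{Result: bound on the roots of polynomials}-(\ref{Result: constructive lower bound on the roots of polynomials},\ref{Result: nonconstructive lower bound on the roots of polynomials}); only the \emph{lower} bound \Cref{Result: bound on the roots of polynomials}-(\ref{Result: upper bound on the roots of polynomials}) is relevant to the running-time analysis (the upper bounds serve only to show the technique cannot yield a polynomial bound).

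The paper's actual argument avoids packing entirely. It uses the root bound to show that, for every pair of strategy profiles and every vertex, the \emph{sign} of the comparison $\discounted_\discfac(G_\vertex^{\strategyone_1, \strategytwo_1}) - \discounted_\discfac(G_\vertex^{\strategyone_2, \strategytwo_2})$ is constant on $[\discfac_0, 1)$, since the numerator polynomial $F$ has no root there (\Cref{Result: ordering of strategy profiles}). Combined with the assumption that ties in the Bellman extractor are broken independently of $\discfac$, this forces \SI\ to visit the exact same sequence of strategies for every $\discfac \in [\discfac_0, 1)$ (\Cref{Result: equivalence of sequences of strategies}). One then case-splits: for $\discfac \le \discfac_0$, the classical bound $\calO\!\left(\tfrac{m}{1-\discfac}\log\tfrac{n}{1-\discfac}\right)$ of \Cref{Result: time complexity of SI for TBG-BIN} already gives $n^{\calO(W^{1/4}\sqrt{n})}$; for $\discfac > \discfac_0$, the iteration count equals that at $\discfac_0$, giving the same bound. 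This strategy-sequence invariance — rather than a direct gap/packing argument — is the missing idea.
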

\end{tcolorbox}

\begin{remark}[Implications]
A key implication of our result is that we obtain the first deterministic sub-exponential time algorithm for \TBGW.
In fact, as long as $W=O(n^{2-\eps})$ for $\eps>0$, we obtain a deterministic sub-exponential algorithm for \DSCV.
\end{remark}

\smallskip\noindent{\em Overview.}
Our analysis focuses on the difference of values between two lasso-shaped plays and uses a class of polynomials and the properties of their roots.
In \Cref{section: polynomials}, we present the results related to upper and lower bounds on the roots of polynomials.
In \Cref{section: si}, we present the improved analysis of the SI algorithm by employing the results of \Cref{section: polynomials}.

\section{Bounds on the roots of polynomials with integer coefficients}
\label{section: polynomials}
In this section, we present some bounds on the roots of polynomials with integer coefficients.
For a polynomial of degree $N$ with integer coefficients bounded by $W$, we show that the roots of the polynomial, which are not equal to 1, are at most sub--exponentially close to 1 in terms of $N$ and $W$. This result is achieved by \Cref{Result: bound on the roots of polynomials}, which is a generalization of \cite{borwein1999littlewood}. We then present non-constructive and constructive upper bounds on how close to 1 a root of the polynomial can be.

\smallskip\noindent{\bf Some illustrations of roots.} In \Cref{Figure: All roots}, we can observe the distribution of all roots for polynomials of degree at most 5 with integer coefficients ranging from -4 to 4. Additionally, \Cref{Figure: Roots around 1} illustrates the behavior of roots around 1.

\begin{figure*}[t]
  \begin{subcaptiongroup}
    \parbox[b]{.45\textwidth}{%
    \centering
    \includegraphics[width=0.28\textwidth]{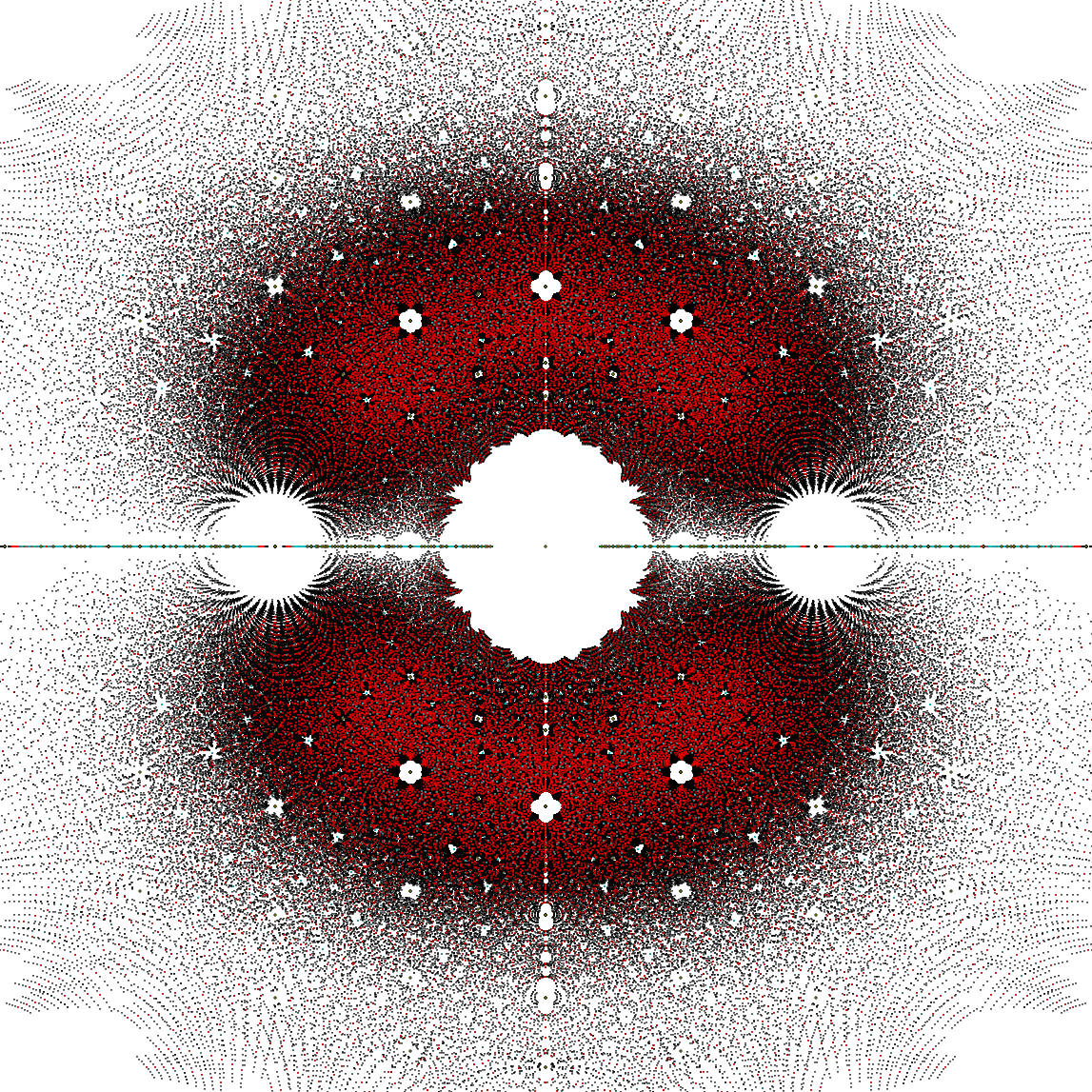}
    \caption{All roots} \label{Figure: All roots}}%
    \parbox[b]{.45\textwidth}{%
    \centering
    \includegraphics[width=0.26\textwidth]{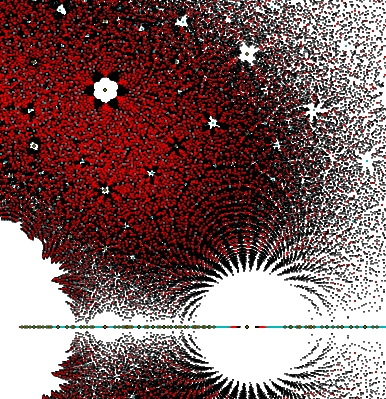}
    \caption{Roots around 1}\label{Figure: Roots around 1}}%
  \end{subcaptiongroup}
    \caption{
    Real and complex roots of all polynomials of degree at most 5 with integer coefficients ranging from -4 to 4. The horizontal axis is the real axis and the vertical axis is the imaginary axis. Roots of quadratic, cubic, quartic, and quintic polynomials are in grey, cyan, red, and black, respectively. The big hole in the middle is centered at 0, and the second biggest holes are at ±1.  (source: \cite{baez2023roots})
    }
  \label{Figure: Roots of polynomials}
\end{figure*}

\smallskip\noindent{\bf Polynomials.} A polynomial $P$ of degree $N$ with real coefficients bounded by $W$ is defined as 
\[
    P(x) = \sum_{i=0}^N a_ix^i\,,
\]
where $|a_i| \le W$ and $a_N \neq 0$. 
We denote by $\calP_N^W$ the set of all polynomials of degree $N$ with integer coefficients bounded by $W$.
We denote the degree of $P$ by $\mathfrak{d}(P)$.
We say $\alpha$ is a root of $P$ if $P(\alpha) = 0$. 
A root $\alpha$ is of order $k$ if there exists a polynomial $Q$ with rational coefficients such that $Q(\alpha) \neq 0$ and $P(x) = Q(x)(x-\alpha)^k$.
In this work, we only consider roots that are real numbers.

\smallskip\noindent{\bf Problem definition.} The problem of the roots of a polynomial is defined as follows.
\begin{tcolorbox}
    \smallskip\noindent{\bf \ROOTPOLY.} Given two positive integers $N$ and $W$, provide lower and upper bounds on 
    \[
        \inf \left \{ |1 - \alpha| \mid P \in \calP_N^W, P(\alpha) = 0, \alpha \neq 1 \right \} \,.
    \]
\end{tcolorbox}

\smallskip\noindent{\bf Previous Works.} Borwein et al. \cite{borwein1999littlewood} presents an upper and lower bound for a special variant of \ROOTPOLY\ problem. The main results are summarized as follows.
\begin{theorem}[\cite{borwein1999littlewood}]
    For a fixed non-negative integer $k$, the following assertions hold:
    \begin{compactitem}
        \item \emph{Lower bound}. Consider a polynomial $P$ of degree $N$ with $\{-1, 0, +1\}$ coefficients. If $P$ has a root of order $k$ at 1, then, for all roots $\alpha \neq 1$, we have
        \[
            |1 - \alpha| \ge \frac{4^{k+1}(k+1)!}{(N+1)^{k+2}} - \calO\left(\frac{c_1}{N^{k+3}}\right)\,,
        \]
        where $c_1 = c_1(k)$ is independent of $N$. 
        \item \emph{Upper bound}. There exists a polynomial $P$ of degree $N$ with $\{-1, 0, +1\}$ coefficients and a root $\alpha$ of $P$ such that
        \[
            |1 - \alpha| \le \frac{2^{\frac{1}{2}(k+1)(k+4)}}{N^{k+2}} + \calO\left(\frac{c_2}{N^{2k+3}}\right)\,,
        \]
        where $c_2 = c_2(k)$ is independent of $N$.
    \end{compactitem}
\end{theorem}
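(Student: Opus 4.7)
Both bounds hinge on exploiting the integrality of the coefficients of $P$ through a Taylor expansion at the point $1$. Write $P(x) = \sum_{j=0}^N b_j (x-1)^j$ with $b_j = P^{(j)}(1)/j!$, and note that $j!\,b_j = \sum_{i=j}^N a_i\, i!/(i-j)!$ is an integer since $a_i \in \{-1,0,+1\}$. The hypothesis that $1$ is a root of order exactly $k$ forces $b_0 = \cdots = b_{k-1} = 0$ and, crucially, $|k!\, b_k| \geq 1$, so $|b_k| \geq 1/k!$.

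\textbf{Lower bound.} The plan is to substitute $x = \alpha$ in the Taylor expansion and divide through by $(\alpha - 1)^k \neq 0$, which yields
\[
  b_k = -\sum_{j=k+1}^N b_j (\alpha - 1)^{j-k}.
\]
To bound the right-hand side I would use the hockey-stick identity: since $|a_i| \leq 1$, we have $|b_j| \leq \frac{1}{j!}\sum_{i=j}^N \frac{i!}{(i-j)!} = \binom{N+1}{j+1}$. Combining with $|b_k| \geq 1/k!$ gives the master inequality
\[
  \frac{1}{k!} \leq \sum_{j=k+1}^N \binom{N+1}{j+1} |\alpha - 1|^{j-k}.
\]
For small $|\alpha - 1|$ the dominant term on the right is $\binom{N+1}{k+2}|\alpha - 1|$, and solving asymptotically gives $|\alpha - 1| \gtrsim (k+1)(k+2)/(N+1)^{k+2}$ to leading order. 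Promoting this crude constant to the stated $4^{k+1}(k+1)!$ requires a more refined asymptotic treatment of the full polynomial inequality, tracking all tail contributions simultaneously rather than only the first-order term.

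\textbf{Upper bound.} For the upper bound the goal is to explicitly construct a family of polynomials $P_N$ of degree $N$ with $\{-1,0,+1\}$ coefficients, having a root of order $k$ at $1$ and another real root $\alpha$ at distance $\calO(N^{-(k+2)})$ from $1$. Following the approach of~\cite{borwein1999littlewood}, I would start from the ansatz $P_N(x) = (1-x)^k R_N(x)$ and design $R_N$ so as to interpolate a target function that vanishes to a prescribed order at a point just outside $1$. A natural template is to round the coefficients of a truncated Chebyshev/Newman-type polynomial to $\{-1,0,+1\}$ and then argue via an intermediate value / continuity argument that a real root persists near $1$ at the desired distance, with the constant $2^{(k+1)(k+4)/2}$ emerging from the interpolation weights of the construction.

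\textbf{Main obstacle.} The qualitative structure of the lower bound is elementary, but the sharpness of the constants on \emph{both} sides is delicate. The main difficulty is simultaneously: (i) tracking the cancellations in the master inequality closely enough to extract the constant $4^{k+1}(k+1)!$ rather than the crude $(k+1)(k+2)$ one reads off from the first-order term; and (ii) exhibiting an explicit construction whose root saturates the exponent $k+2$ with the matching constant $2^{(k+1)(k+4)/2}$, thereby certifying that $N^{-(k+2)}$ is the correct order. This matching of leading terms between the analytic lower bound and the constructive upper bound is where the heart of the argument lies.
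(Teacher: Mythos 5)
First, a point of context: the paper does not prove this theorem at all; it is quoted from Borwein et al.~\cite{borwein1999littlewood} as background, and the paper only establishes (and only needs) a generalization with a much weaker constant, namely \Cref{Result: bound on the roots of polynomials with order}, which gives $|1-\alpha| > (k+1)!/\bigl(2W(N+1)^{k+2}\bigr)$. Your lower-bound skeleton is exactly the technique used there: Taylor-expand at $1$, exploit integrality of the Taylor data, and bound the tail. However, you give away a factor of $k!$ at the crucial step. You only use that $k!\,b_k = P^{(k)}(1)$ is a nonzero integer, concluding $|b_k|\ge 1/k!$. In fact $b_k = P^{(k)}(1)/k! = \sum_{i\ge k} a_i \binom{i}{k}$ is itself a nonzero \emph{integer}, so $|b_k|\ge 1$; equivalently, $P^{(k)}(1)$ is divisible by $k!$. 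With that correction your leading-order computation yields $|\alpha-1|\gtrsim (k+2)!/(N+1)^{k+2}$ rather than $(k+1)(k+2)/(N+1)^{k+2}$, which already matches the shape of the stated bound up to the factor $4^{k+1}$. Neither your argument nor the paper's reaches the constant $4^{k+1}(k+1)!$; that requires the finer analysis of~\cite{borwein1999littlewood}, and you correctly flag this as open in your write-up. Since the paper only invokes the weaker generalization in its own development, this shortfall is harmless for the paper but is a genuine gap against the theorem as literally stated.

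The upper bound in your proposal is only an ansatz: you describe a template ($P_N(x)=(1-x)^kR_N(x)$ with rounded Chebyshev/Newman coefficients) but never produce a concrete polynomial, verify that its coefficients lie in $\{-1,0,+1\}$, or run the intermediate-value argument that certifies a root at distance $\calO(N^{-(k+2)})$ from $1$. That half of the statement is therefore unproved. For comparison, the paper's own (different) constructive upper bound, \Cref{Result: bound on the roots of polynomials}-(\ref{Result: constructive lower bound on the roots of polynomials}), is obtained by taking the explicit product $x^{M-2^{\lfloor\log M\rfloor}+1}\prod_{i}(x^{2^i}-1)$, which has a root of order $\lfloor\log M\rfloor$ at $1$, and then perturbing it via \Cref{Result: lower bound on the roots of polynomials given a polynomial} to force a nearby real root; if you want a complete construction, that two-step pattern (exhibit a high-order root at $1$, then perturb to split off a close root) is the route to follow, though it yields a quasi-polynomial rather than the sharp $N^{-(k+2)}$ rate of Borwein et al.
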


\begin{remark}
    Borwein et al. consider a class of polynomials with $\{-1, 0, +1\}$ coefficients and a root of order $k$ at 1. They present an asymptotic upper and lower bound on this class of polynomials when $k$ is fixed, and $N$ grows to infinity.
\end{remark}

\smallskip\noindent{\bf Our results.} 
We generalize the work of Borwein et al. \cite{borwein1999littlewood} to a class of polynomials with integer coefficients bounded by $W$, and our result is independent of $k$.

\begin{theorem}
\label{Result: bound on the roots of polynomials}
    The following assertions hold:
    \begin{enumerate}
        \item 
        \label{Result: upper bound on the roots of polynomials}
        \emph{Sub-exponential lower bound}. Consider a polynomial $P$ of degree at most $N$ with integer coefficients bounded by $W$. For all roots $\alpha \neq 1$ of $P$, we have
        \[
            |1 - \alpha| > \frac{\left \lfloor \frac{16}{7} W^{1/4} \sqrt{N} \right \rfloor!}{2W(N+1)^{\frac{16}{7} W^{1/4} \sqrt{N} + 6}}\,.
        \]
        \item 
        \label{Result: constructive lower bound on the roots of polynomials}
        \emph{Quasi-polynomial constructive upper bound}. 
        For a sufficiently large positive integer $N$, we present an explicit polynomial $P$ of degree $N$ with $\{-2, -1, 0, +1, +2\}$ coefficients and a root $\alpha < 1$ such that
        \[
            |1 - \alpha| \le 2 \left (\left \lfloor N^{2/5} \right \rfloor - 1 \right )^{-3/2 \left \lfloor \log  \left (\left \lfloor N^{2/5} \right \rfloor - 3 \right ) \right \rfloor} \,.
        \]
        \item 
        \label{Result: nonconstructive lower bound on the roots of polynomials}
        \emph{Sub-exponential nonconstructive upper bound}.
        For a sufficiently large positive integer $N$, there exists a polynomial $P$ of degree $N$ with $\{-2, -1, 0, +1, +2\}$ coefficients and a root $\alpha < 1$ such that
        \[
            |1 - \alpha| \le 2 \left ( \left \lfloor N^{2/5} \right \rfloor - 1 \right )^{-3/2\left\lfloor \sqrt{\frac{\left \lfloor N^{2/5} \right     \rfloor - 3}{\log (N^{2/5})}} \right\rfloor}  \,.
        \]
    \end{enumerate}
\end{theorem}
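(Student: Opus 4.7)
The plan is to refine the Taylor-expansion argument of Borwein et al.~\cite{borwein1999littlewood} to polynomials in $\calP_N^W$ and to combine it with an auxiliary bound on the maximum multiplicity of $1$ as a root in this class. Given $P \in \calP_N^W$ with a root $\alpha \neq 1$, let $k$ be the multiplicity of $1$ as a root of $P$. Expanding $P$ at $1$,
\[
0 = P(\alpha) = \sum_{j=k}^{N} c_j (\alpha-1)^j, \qquad c_j \defas \frac{P^{(j)}(1)}{j!} = \sum_{i=j}^N \binom{i}{j} a_i.
\]
Each $c_j$ is an integer, $c_k \neq 0$ (so $|c_k| \geq 1$) because $k$ is the exact multiplicity, and $|c_j| \leq W\binom{N+1}{j+1}$ by the triangle inequality. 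Isolating the leading term and dominating the tail by a geometric series — valid because we may assume $|1-\alpha| \leq (k+3)/(2N)$, otherwise the conclusion is trivial — yields a $k$-dependent bound of the form $|1-\alpha| \geq (k+2)!\,/\,(2W(N+1)^{k+2})$. The second ingredient is a uniform upper bound on $k$: for every $P \in \calP_N^W$, one has $k = O(W^{1/4}\sqrt{N})$, with the explicit constant $16/7$. This can be proven by an $L^2$ estimate on the unit circle, using $\|P\|_2^2 \leq (N+1)W^2$ together with the concentration of $|z-1|^{2k}$ near $z=1$. Substituting this bound on $k$ into the Taylor estimate and absorbing lower-order factors yields the claimed sub-exponential lower bound.

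\textbf{Parts 2 and 3 (upper bounds).} The plan for both is to construct polynomials in $\calP_N^2$ having a real root very close to $1$. Both constructions start from a low-degree polynomial $B$ of degree $N' \defas \lfloor N^{2/5}\rfloor$ crafted so that $B$ has a root close to $1$, then embed $B$ into a degree-$N$ polynomial with $\{-2,-1,0,+1,+2\}$ coefficients by padding. For Part 2 (constructive), $B$ is built explicitly as a signed combination of geometric-series factors $1+x+\cdots+x^{m_i-1}$ engineered so that $B$ vanishes at $1$ to a prescribed high order while the next-order Taylor coefficient is small; direct estimation then places the resulting real root at distance $N^{-\Theta(\log N)}$ from $1$. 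For Part 3 (non-constructive), we invoke a pigeonhole argument: among the $2^{N'+1}$ polynomials of degree at most $N'$ with $\{0,+1\}$ coefficients, their values at a suitable test point $x_0 < 1$ lie in a bounded interval, so two of them must agree to within a sub-exponentially small margin. Their difference has $\{-1,0,+1\}$ coefficients, is a non-zero integer at $1$, but is tiny at $x_0$; by the intermediate value theorem it has a root in $[x_0,1]$ at distance at most the claimed sub-exponential quantity from $1$. The choice $N' = \lfloor N^{2/5}\rfloor$ in both cases balances the degree spent on $B$ against the residual degree used for padding and optimizes the resulting distance bound.

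\textbf{Main obstacle.} The subtle step is the multiplicity bound $k \leq \tfrac{16}{7} W^{1/4}\sqrt{N}$ used in Part 1. A direct coefficient expansion of $(x-1)^k Q(x)$ gives only $k = O(\log W)$ in the worst case, since the central binomial coefficient $\binom{k}{\lfloor k/2\rfloor}$ must be cancelled by corresponding coefficients of $Q$. Obtaining the correct polynomial dependence in $N$ and $W$, and especially pinning down the constant $16/7$ which dictates the exponent in the final lower bound, requires an extremal analysis exploiting the cancellations available in $Q = P/(x-1)^k$ via an analytic argument on the unit circle. Once this multiplicity estimate is in hand, the remaining pieces — the Taylor bound in Part 1, the explicit factor construction in Part 2, and the pigeonhole step in Part 3 — are technical but comparatively routine.
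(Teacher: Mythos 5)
Your Part~1 has the correct two-step skeleton — a Taylor-expansion bound in terms of the multiplicity $k$ of the root at~1, combined with a uniform bound on $k$ — and the first step matches the paper's Lemma essentially line-for-line. However, your route to the multiplicity bound $k = O(W^{1/4}\sqrt{N})$ is not the paper's. You propose an $L^2$ estimate on the unit circle using $\|P\|_2^2 \le (N+1)W^2$ and the concentration of $|z-1|^{2k}$ near $z=1$; the paper instead uses a dual/test-function argument: if $1$ is a root of order $k$, then $\sum_{i=0}^N a_i F(i) = 0$ for \emph{every} polynomial $F$ of degree $k-1$ (proved by iterating the operator $x\frac{d}{dx}$), and it then explicitly constructs, from a partial sum of Chebyshev polynomials $g = \tfrac12 T_0 + \sum_{t\le\mu} T_t$ rescaled via $F(x) = g(1-2x/N)^4$, a degree-$(k-1)$ polynomial with $F(0) > W\sum_{i\ge 1}|F(i)|$, yielding a contradiction. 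Your $L^2$ sketch is both a genuinely different mechanism and too underspecified to see that it reproduces the precise $\tfrac{16}{7}W^{1/4}\sqrt{N}$ exponent, so this step is a real gap, not just a presentational difference.

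The more serious problem is in Parts~2 and~3: you conflate ``a root of high order \emph{at}~$1$'' with ``a root \emph{close to} $1$''. Both the explicit product $\prod_{i<\lfloor\log M\rfloor}(x^{2^i}-1)$ (Part~2) and the pigeonhole difference (Part~3) only produce a polynomial with a high-multiplicity root exactly at $1$; they do not by themselves have a nearby real root strictly below $1$. The paper's crucial missing step is the lift construction: given a low-degree $F$ with $\{-1,0,1\}$ coefficients and a root of order $k$ at $1$, set $\widehat F(x)=(x-1)F(x)$ and define $H(x) = \widehat F(x^d)\cdot\frac{1-x^d}{1-x} + F(x)$. This $H$ has $\{-2,\dots,2\}$ coefficients, degree roughly $d\cdot\mathfrak{d}(F)$, and — after a careful Taylor analysis at $1$ — a sign change at $1-2d^{-(k+2)}$, hence a root there. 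Without this lift, your sentence ``direct estimation then places the resulting real root at distance $N^{-\Theta(\log N)}$ from $1$'' is unjustified; the stated explicit bound of the form $2(\lfloor N^{2/5}\rfloor-1)^{-\frac32\lfloor\cdots\rfloor}$ is exactly the signature of this lift with $d=\lceil (M+2)^{3/2}\rceil$.

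Finally, your Part~3 pigeonhole is also structurally different from the paper's and has a hole. The paper pigeonholes on the vector of derivatives at $1$, $(Q(1),Q'(1),\dots,Q^{(k-1)}(1))$, so that two $\{0,1\}$-polynomials agreeing on all of these have a difference with a root of order $\ge k$ at $1$; that difference is then fed into the lift. You instead pigeonhole on values at a single test point $x_0<1$ and invoke IVT, asserting the difference $R=Q_1-Q_2$ ``is a non-zero integer at $1$''. That is not guaranteed: $R(1)=\sum(a_i-b_i)$ can perfectly well be $0$, in which case your IVT step collapses. Even when $R(1)\ne 0$, this one-point pigeonhole yields a bound of a qualitatively different shape than the theorem's $2(\lfloor N^{2/5}\rfloor-1)^{-\frac32\lfloor\sqrt{(\lfloor N^{2/5}\rfloor-3)/\log N^{2/5}}\rfloor}$, which again encodes the lift with parameter $d$. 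You need to pigeonhole on the Taylor jet at $1$ and then apply the lift, as the paper does.
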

\Cref{section: lower bound poly} proves the lower bound, and \Cref{section: upper bound poly} proves the upper bounds given in \Cref{Result: bound on the roots of polynomials}.

\subsection{Lower bound}
\label{section: lower bound poly}
In this section, we show the lower bound on the roots of polynomials with integer coefficients. 
The proof relies on two components: (a)~a lower bound for polynomials with a root of order $k$ at~1 (\Cref{Result: bound on the roots of polynomials with order}); and (b)~an upper bound on the order of~1 as a root (\Cref{Result: bound on order of root at 1}).
These results yield \Cref{Result: bound on the roots of polynomials}-(\ref{Result: upper bound on the roots of polynomials}).

\begin{lemma}
\label{Result: bound on the roots of polynomials with order}
    Consider a polynomial $P$ of degree at most $N$ with integer coefficients bounded by $W$. If $P$ has a root of order at most $k$ at 1, then, for all roots $\alpha \neq 1$, we have
    \[
        |1 - \alpha| > \frac{(k+1)!}{2W(N+1)^{k+2}}\,.
    \]
\end{lemma}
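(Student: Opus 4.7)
The plan is to expand $P$ around the point $x=1$ via Taylor's theorem. Writing $P(x)=\sum_{j=0}^{N} c_j(x-1)^j$, one has $c_j=\sum_{i=j}^N \binom{i}{j} a_i\in\ZZ$, since each $a_i$ is an integer. If the root at $1$ has exact order $k'\le k$, then $c_j=0$ for $j<k'$ and $|c_{k'}|\ge 1$. Because the ratio of consecutive values of $(k+1)!/(N+1)^{k+2}$ equals $(k+2)/(N+1)\le 1$ for $k\le N-1$, this quantity is non-increasing in $k$, so it suffices to prove the lower bound with $k$ replaced by $k'$; from now on I assume the order is exactly $k$.

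By the hockey-stick identity, $|c_j|\le W\sum_{i=j}^{N}\binom{i}{j}=W\binom{N+1}{j+1}$. Substituting $x=\alpha$ into the Taylor expansion, dividing by $(\alpha-1)^k$, and using $|c_k|\ge 1$, I obtain, with $\delta\defas|1-\alpha|$,
\[
1 \;\le\; \sum_{j=k+1}^{N} |c_j|\,\delta^{j-k} \;\le\; W\sum_{s=1}^{N-k}\binom{N+1}{k+s+1}\,\delta^{s}.
\]
The ratio of the $(s{+}1)$-th to the $s$-th term of this sum is $\frac{N-k-s}{k+s+2}\,\delta$, which is strictly decreasing in $s$. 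Hence, provided $\delta\le\frac{k+3}{2(N-k-1)}$, every ratio is at most $1/2$, the series is bounded by twice its first term, and one obtains $1\le 2W\binom{N+1}{k+2}\,\delta$. Combined with $\binom{N+1}{k+2}<(N+1)^{k+2}/(k+2)!$, this yields $\delta>\frac{(k+2)!}{2W(N+1)^{k+2}}>\frac{(k+1)!}{2W(N+1)^{k+2}}$, as desired.

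In the complementary regime $\delta>\frac{k+3}{2(N-k-1)}$, I verify directly that this quantity already exceeds $\frac{(k+1)!}{2W(N+1)^{k+2}}$. This reduces to the inequality $(k+3)\,W\,(N+1)^{k+2}\ge(N-k-1)\,(k+1)!$, which follows for $W\ge 1$ and $N\ge k$ from the elementary estimate $(N+1)^{k+1}\ge(k+1)!$. The main obstacle is really the tail estimate in the first regime: the whole argument hinges on showing that $\sum_s\binom{N+1}{k+s+1}\delta^s$ is comparable to its leading term, and on choosing the cut-off threshold for $\delta$ so that it is both large enough for the geometric-ratio trick to work and small enough that the excluded regime already satisfies the claimed bound. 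Once these two points are lined up, the proof is a short chain of elementary inequalities, with the coefficient bound $|c_j|\le W\binom{N+1}{j+1}$ supplied essentially for free by the hockey-stick identity.
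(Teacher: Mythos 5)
Your proposal is correct and follows essentially the same route as the paper: Taylor expansion of $P$ at $1$, integrality of the first nonvanishing Taylor coefficient $c_k=P^k(1)/k!$ giving $|c_k|\ge 1$, a bound on the higher coefficients, and a case split on whether $|1-\alpha|$ is below a smallness threshold so that the tail is dominated by its first term (the paper uses the cruder bound $|c_j|\le W(N+1)^{j+1}/j!$ with threshold $1/(N+1)$, while you use the hockey-stick bound $|c_j|\le W\binom{N+1}{j+1}$ with a geometric-ratio threshold, which even yields the slightly stronger constant $(k+2)!$). The only cosmetic caveat is the degenerate case $N=k+1$, where your threshold $\tfrac{k+3}{2(N-k-1)}$ is undefined but the single-term sum makes the first regime hold vacuously for all $\delta$.
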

\begin{proof}
    Consider $P = \sum_{i=0}^N a_ix^i$. We denote by $P^j$ the $j$-th derivative of $P$. Note that 
    \[
        P^j(1) = \sum_{i=j}^N j! \binom{i}{j} a_i\,.
    \]
    It follows that $P^j(1)$ is an integer divisible by $j!$, and we have
    \begin{equation}
    \label{Eq: Bound on the derivative of polynomial}
        |P^j(1)| \le W(N+1)^{j+1}\,. 
    \end{equation}
    We consider the Taylor expansion of $P$ around 1. We have 
    \[
        P(x) = P(1) + \sum_{j = 1}^N \frac{P^j(1)}{j!} (x-1)^j\,.
    \]
    We know that $P(1) = 0$, $P(\alpha) = 0$, and $P^j(1) = 0$ for all $j < k$. Without loss of generality, we assume 
    \[
        |\alpha - 1| \le \frac{1}{N+1}\,,
    \]
    otherwise the result follows immediately. By algebraic manipulation and triangle inequality, we get 
    
    \begin{align*}
        \frac{|P^k(1)|}{k!} &\le \sum_{j=k+1}^N \frac{|P^j(1)|}{j!} |\alpha - 1|^{j-k}\\
        &\le \sum_{j=k+1}^N \frac{W(N+1)^{j+1}}{j!}|\alpha-1|^{j-k} & \left ( \textsc{\Cref{Eq: Bound on the derivative of polynomial}}\right )\\
        &\le W(N+1)^{k+2}|\alpha-1|\sum_{j=k+1}^N \frac{1}{j!} & \left (|\alpha - 1| \le \frac{1}{N+1} \right ) \\
        &< \frac{2W(N+1)^{k+2}}{(k+1)!}|\alpha-1| \,. & \left (\sum_{j=k+1}^N \frac{1}{j!} < \frac{2}{(k+1)!} \right )
    \end{align*}
    We know that $P^k(1) \neq 0$, is integer, and divisible by $k!$. Therefore, 
    \[
        \frac{|P^k(1)|}{k!} \ge 1\,.
    \]
    Hence,
    \[
        \frac{(k+1)!}{2W(N+1)^{k+2}} < |\alpha - 1|\,,
    \]
    which completes the proof.
\end{proof}

To prove an upper bound on the order of~1 as a root (\Cref{Result: bound on order of root at 1}), we first present Chebyshev polynomials and their basic property, and then, we present a lemma on the existence of a specific class of polynomials (\Cref{Result: existence of a specific polynomial}), and finally, we prove \Cref{Result: bound on order of root at 1}.

\smallskip\noindent{\bf Chebyshev Polynomials.} We denote by $T_t$ the Chebyshev polynomial of degree $t$ defined recursively as follows
\begin{align*}
    &(1)\ \  T_0(x) = 1,\\
    &(2)\ \  T_1(x) = x,\\
    &(3) \ \  T_{t+1}(x) = 2xT_t(x) - T_{t-1}(x) \,.
\end{align*}
\begin{lemma}[Folklore]
\label{Result: property of Chebyshev polynomials}
    For a positive integer $t$, the Chebyshev polynomial $T_t$ satisfies that, for all $\theta \in \RR$, we have $T_t(\cos \theta) = \cos t \theta$.
\end{lemma}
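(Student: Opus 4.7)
The plan is to prove this by strong induction on $t$, using the three-term recurrence defining $T_t$ together with the product-to-sum identity for cosine. The key algebraic identity I will invoke is
\[
    2 \cos \theta \cos(t\theta) = \cos((t+1)\theta) + \cos((t-1)\theta) \,,
\]
which is immediate from the sum-of-angles formula $\cos(A \pm B) = \cos A \cos B \mp \sin A \sin B$ applied to $A = t\theta$ and $B = \theta$ and then added.

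First I would verify the base cases. By the definition of the recurrence, $T_0(x) = 1$ and $T_1(x) = x$, so $T_0(\cos \theta) = 1 = \cos(0 \cdot \theta)$ and $T_1(\cos \theta) = \cos \theta = \cos(1 \cdot \theta)$, which handles $t = 0$ and $t = 1$. This is where the induction has to start, because the recurrence involves the two preceding values.

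For the inductive step, fix $t \ge 1$ and assume that both $T_{t-1}(\cos \theta) = \cos((t-1)\theta)$ and $T_t(\cos \theta) = \cos(t \theta)$ hold for every $\theta \in \RR$. Applying the defining recurrence with $x = \cos \theta$ and then the identity displayed above, I obtain
\[
    T_{t+1}(\cos \theta) = 2 \cos \theta \, T_t(\cos \theta) - T_{t-1}(\cos \theta) = 2 \cos \theta \cos(t \theta) - \cos((t-1) \theta) = \cos((t+1) \theta) \,,
\]
closing the induction.

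There is no real obstacle: the statement is folklore and the argument is a two-line induction. The only thing that requires any care is invoking both $t-1$ and $t$ in the inductive hypothesis (i.e., strong induction, or equivalently a joint induction on the pair $(T_{t-1}, T_t)$), which is necessary because the Chebyshev recurrence is of second order; once this is set up, the trigonometric product-to-sum identity matches the recurrence exactly and the result is immediate.
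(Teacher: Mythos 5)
Your proof is correct and follows essentially the same route as the paper's: induction on $t$ with base cases $t=0,1$, applying the recurrence and the product-to-sum identity $2\cos\theta\cos(t\theta)=\cos((t+1)\theta)+\cos((t-1)\theta)$. Your explicit remark that the induction hypothesis must cover both $t-1$ and $t$ is a point the paper leaves implicit, but the argument is the same.
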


\begin{proof}
    We present the proof for completeness. The proof proceeds by induction. 
    
    \smallskip\noindent{\em Base case $t = 0,1$.} 
    We have $T_0(\cos \theta) = 1$ and $T_1(\cos \theta) = \cos \theta$, which completes the case.

    \smallskip\noindent{\em Induction case $t > 1.$}
    We have
    \begin{align*}
        T_{t+1}(\cos \theta) &= 2 \cos \theta T_t(\cos \theta) - T_{t-1}(\cos \theta)\\
        &= 2 \cos \theta \cos t \theta - \cos (t-1) \theta\\
        &= \cos (t+1) \theta \,,
    \end{align*}
    where in the first equality we use the definition of $T_{t+1}$, in the second equality we use induction, and in the third equality we use $2 \cos x \cos y = \cos (x + y) + \cos (x - y)$, which concludes the induction case and yields the result.
\end{proof}

\begin{lemma}
\label{Result: existence of a specific polynomial}
    For every positive integers $N$ and $W$, there exists a polynomial $F$ of degree $k$ where $k \ge \left \lfloor \frac{16}{7} W^{1/4} \sqrt{N}  \right \rfloor + 4$ such that 
    \[
        F(0) > W\sum_{i=1}^N |F(i)|\,.
    \]
\end{lemma}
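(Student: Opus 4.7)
The plan is to construct $F$ explicitly as a Chebyshev polynomial composed with a linear change of variables. Define $L(x)\defas\frac{N+1-2x}{N-1}$, which maps $\{1,\dots,N\}$ into $[-1,1]$ (in particular $L(1)=1$ and $L(N)=-1$) while $L(0)=\frac{N+1}{N-1}>1$. Set $F(x)\defas T_k(L(x))$, a polynomial of degree exactly $k$, with the integer $k$ chosen below.

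The upper bound $\sum_{i=1}^N|F(i)|\le N$ is immediate from \Cref{Result: property of Chebyshev polynomials}: for any $y=\cos\theta\in[-1,1]$ one has $|T_k(y)|=|\cos k\theta|\le 1$, so $|F(i)|\le 1$ for each $i\in\{1,\dots,N\}$ and hence the sum is at most $N$.

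The lower bound on $F(0)$ is the main computational step, and it needs a hyperbolic analogue of \Cref{Result: property of Chebyshev polynomials}: rerunning the induction used there, but with the identity $2\cosh x\cosh y=\cosh(x+y)+\cosh(x-y)$ in place of the cosine counterpart, yields $T_k(\cosh\theta)=\cosh(k\theta)$ for every $\theta\in\RR$. Solving $\cosh\theta=L(0)=\frac{N+1}{N-1}$ gives $e^\theta=\frac{\sqrt{N}+1}{\sqrt{N}-1}$, and the elementary estimate $\theta=\int_{\sqrt{N}-1}^{\sqrt{N}+1}\frac{dt}{t}\ge\frac{2}{\sqrt{N}+1}$ implies
\[
F(0)=\cosh(k\theta)\ge \tfrac{1}{2}e^{k\theta}\ge \tfrac{1}{2}\exp\!\bigl(2k/(\sqrt{N}+1)\bigr).
\]
Consequently $F(0)>WN\ge W\sum_{i=1}^{N}|F(i)|$ is guaranteed as soon as $k>\tfrac{\sqrt{N}+1}{2}\ln(2WN)$; let $k_\star$ denote the smallest integer meeting this strict inequality.

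To finish, pick $k\defas\max\!\bigl(k_\star,\,\lfloor\tfrac{16}{7}W^{1/4}\sqrt{N}\rfloor+4\bigr)$. By construction $F=T_k\circ L$ has degree $k\ge\lfloor\tfrac{16}{7}W^{1/4}\sqrt{N}\rfloor+4$, exactly as the lemma demands, while $k\ge k_\star$ forces $F(0)>W\sum_{i=1}^{N}|F(i)|$. The only nontrivial step is deriving the hyperbolic identity $T_k(\cosh\theta)=\cosh(k\theta)$ and the matching lower estimate on $\theta$; everything else is routine bookkeeping.
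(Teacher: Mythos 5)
Your construction is clean and self-contained, and each of the analytic steps checks out: the hyperbolic identity $T_k(\cosh\theta)=\cosh(k\theta)$ follows by the same induction as the cosine identity, the value $e^\theta=\frac{\sqrt{N}+1}{\sqrt{N}-1}$ is the correct solution of $\cosh\theta = \frac{N+1}{N-1}$, and the estimates $\theta\ge\frac{2}{\sqrt{N}+1}$ and $\cosh(k\theta)\ge\frac12 e^{k\theta}$ are fine. (One minor gap: the affine map $L$ is undefined when $N=1$, though that case is trivially handled separately.)

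The real problem is that you have proved the statement as literally written, but the inequality $k\ge\lfloor\frac{16}{7}W^{1/4}\sqrt N\rfloor+4$ is almost certainly a typo for $k\le\lfloor\frac{16}{7}W^{1/4}\sqrt N\rfloor+4$. Look at how \Cref{Result: existence of a specific polynomial} is consumed in the proof of \Cref{Result: bound on order of root at 1}: there one assumes, toward a contradiction, that the root order at $1$ is at least $\lfloor\frac{16}{7}W^{1/4}\sqrt N\rfloor+5$, and one needs a polynomial $F$ of degree \emph{at most} (order $-1$), so that the vanishing identity $\sum_i a_iF(i)=0$ applies. A polynomial whose degree merely satisfies $\ge\lfloor\cdots\rfloor+4$ gives no such guarantee, so the contradiction would not follow. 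The paper's own proof confirms this reading: it builds $F$ of degree $4\mu$ and explicitly notes $4\mu\le\lfloor\frac{16}{7}W^{1/4}\sqrt N\rfloor+4$.

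Unfortunately your construction cannot be repaired to meet the intended $\le$ bound. Because $T_k$ grows \emph{exponentially} off $[-1,1]$ while being merely bounded by $1$ on $[-1,1]$, your argument forces the degree to satisfy $k>\frac{\sqrt{N}+1}{2}\ln(2WN)$, which for fixed $W$ is $\Theta(\sqrt N\log N)$ and thus eventually exceeds $\lfloor\frac{16}{7}W^{1/4}\sqrt N\rfloor+4=\Theta(\sqrt N)$. The paper instead takes a Dirichlet-kernel-type sum $g=\frac12 T_0+\sum_{t=1}^\mu T_t$, whose key features are the opposite of the single Chebyshev polynomial's: $g(1)=\mu+\tfrac12$ grows only \emph{polynomially} in the degree, but $|g(x)|\le(2(1-x))^{-1/2}$ \emph{decays} as $x$ moves away from $1$. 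Raising to the fourth power then makes $F(0)\approx\mu^4$ while $\sum_{i=1}^N|F(i)|\lesssim N^2\sum_i i^{-2}=O(N^2)$, so the required inequality already holds at $\mu\approx W^{1/4}\sqrt N$, i.e.\ degree $O(W^{1/4}\sqrt N)$. That polynomial-growth-plus-decay tradeoff is exactly what your choice of $T_k\circ L$ lacks, and it is the crux of the lemma.
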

\begin{proof}
    We define
    \[
        \mu \defas \left \lfloor \frac{4}{7}W^{1/4}\sqrt{N} \right \rfloor + 1, \quad g(x) \defas \frac{1}{2}T_0(x) + \sum_{t=1}^{\mu} T_t(x)\,,
    \]
    where $T_t$ denotes the Chebyshev polynomial of degree $t$. Note that $g(1) = \mu + \frac{1}{2}$, and for $0 < x \le \pi$, we have
    \begin{align*}
        g(\cos x) &= \frac{1}{2} + \sum_{t=1}^{\mu} \cos tx\\
        &= \Re \left (\sum_{t=0}^{\mu} e^{itx} \right ) - \frac{1}{2}\\
        &= \Re \left (\frac{e^{i(\mu + 1) x} - 1}{e^{ix} - 1} \right ) - \frac{1}{2}\\
        &= \Re \left (\frac{e^{\frac{i(\mu + 1) x}{2}}}{e^{\frac{i x}{2}}}\frac{e^{\frac{i(\mu + 1) x}{2}} - e^{-\frac{i (\mu + 1) x}{2}}}{e^{\frac{i x}{2}} - e^{-\frac{i x}{2}}} \right ) - \frac{1}{2}\\
        &= \frac{2 \Re \left (e^{\frac{i \mu x}{2}} \right ) \sin \frac{(\mu + 1) x}{2} - \sin \frac{x}{2}}{2 \sin \frac{x}{2}}\\
        &= \frac{2 \cos \frac{\mu x}{2} \sin \frac{(\mu + 1) x}{2} - \sin \frac{x}{2}}{2 \sin \frac{x}{2}}\\
        &= \frac{\sin \left ( \left (\mu+\frac{1}{2} \right )x \right )}{2 \sin \frac{x}{2}}\\
        &= \frac{\sin \left ( \left (\mu + \frac{1}{2} \right )x \right )}{\sqrt{2(1 - \cos x})} \,,
    \end{align*}
    where in the first equality we use the definition of $g(\cos x)$, in the second equality we use $\cos x = \Re \left (e^{ix} \right )$, in the third equality we use geometric sum, in the fourth equality we use algebraic manipulation, in the fifth equality we use $\sin x = \frac{e^{ix} - e^{-ix}}{2i}$, in the sixth equality we use $\cos x = \Re \left (e^{ix} \right )$, in the seventh equality we use $2 \cos x \sin y = \sin (x + y) + \sin (x - y)$, and in the eighth equality we use $\sin^2 \frac{x}{2} = \frac{1 - \cos x}{2}$.
    Therefore, for all $x \in (-1, 1]$, we have 
    \[
        |g(x)| \le \frac{1}{\sqrt{2(1 - x)}}\,.
    \]
    We define 
    \[
        F(x) \defas \left ( g \left (1 - \frac{2x}{N} \right ) \right )^4, \quad k \defas 4\mu \le \left \lfloor \frac{16}{7}W^{1/4}\sqrt{N} \right \rfloor + 4\,.
    \]
    Then, $F$ is a polynomial of degree $k$. We show that 
    \[
        F(0) \ge W \sum_{i=1}^N |F(i)|\,.
    \]
    Indeed, we have
    \begin{align*}
        W \sum_{i=1}^N |F(i)| &\le W \sum_{i=1}^N \left (\frac{4i}{N} \right )^{-2} & \left (|g(x)| \le \frac{1}{\sqrt{2(1 - x)}} \right )\\
        &= \frac{WN^2}{16} \sum_{i=1}^N \frac{1}{i^2}  & (\text{rearrange}) \\
        &< \frac{\pi^2}{96} WN^2 & \left (\sum_{i=1}^\infty \frac{1}{i^2} = \frac{\pi^2}{6} \right )\\
        &\le \mu^4 & \left (\mu = \left \lfloor \frac{4}{7} W^{1/4}\sqrt{N} \right \rfloor + 1 \right )\\
        &< F(0)\,, & \left (F(0) = \left (\mu + \frac{1}{2} \right)^4 \right )
    \end{align*}
    which concludes the proof.
\end{proof}

\begin{lemma}
\label{Result: bound on order of root at 1}
    Consider a polynomial $P$ of degree at most $N$ with integer coefficients bounded by $W$. If $P$ has a root of order $k$ at 1, then 
    \[
        k \le \left \lfloor \frac{16}{7} W^{1/4} \sqrt{N} \right \rfloor + 4 \,.
    \]
\end{lemma}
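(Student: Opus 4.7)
My plan is to argue by contradiction. Suppose $P \ne 0$ has degree at most $N$, integer coefficients bounded by $W$, and $1$ as a root of order $k > \lfloor \frac{16}{7} W^{1/4}\sqrt{N} \rfloor + 4$. I would invoke \Cref{Result: existence of a specific polynomial} and use the explicit polynomial $F$ produced in its proof: $F$ has degree $4\mu \le \lfloor \frac{16}{7} W^{1/4}\sqrt{N}\rfloor + 4$, hence strictly less than $k$, with $F(0) = (\mu + \tfrac{1}{2})^4 > 0$ and $F(0) > W \sum_{i=1}^{N} |F(i)|$.

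The heart of the argument is to translate ``$1$ is a root of order $k$ of $P$'' into a single linear orthogonality relation on the coefficients of $P$. Writing $P(x) = \sum_{i=0}^N a_i x^i$, the hypothesis gives $P^{(j)}(1) = 0$ for every $0 \le j < k$, equivalently $\sum_{i=0}^N a_i\, i^{\underline{j}} = 0$, where $i^{\underline{j}} = i(i-1)\cdots(i-j+1)$ is the falling factorial. Since the $k$ falling factorials $\{i^{\underline{j}}\}_{j=0}^{k-1}$ form a basis for the space of polynomials in $i$ of degree less than $k$, this is equivalent to $\sum_{i=0}^N a_i\, G(i) = 0$ for every polynomial $G$ with $\deg G < k$.

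I would then plug $G = F$, which is admissible since $\deg F < k$. This yields $a_0 F(0) = -\sum_{i=1}^N a_i F(i)$, and the triangle inequality together with $|a_i| \le W$ gives $|a_0| F(0) \le W \sum_{i=1}^N |F(i)| < F(0)$. Since $a_0 \in \ZZ$, this forces $a_0 = 0$. I then iterate by writing $P(x) = x P_1(x)$: the polynomial $P_1$ has degree at most $N-1$, integer coefficients still bounded by $W$, and (from the identity $P^{(j)}(x) = x P_1^{(j)}(x) + j P_1^{(j-1)}(x)$ evaluated inductively at $x = 1$) still has $1$ as a root of order at least $k$. Applying the same orthogonality with the same $F$ — the bound $\sum_{i=1}^{N-1} |F(i)| \le \sum_{i=1}^{N} |F(i)|$ keeps the dominance intact — yields $a_1 = 0$. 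Iterating drives every coefficient of $P$ to zero, contradicting $P \ne 0$.

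The only non-mechanical step is the orthogonality reformulation via falling factorials; once that identity is in place, the dominance property of $F$ delivers $a_0 = 0$ immediately and the peel-off iteration is automatic. I anticipate no further obstacle, since the required upper bound on the degree of $F$ comes directly from its explicit construction in \Cref{Result: existence of a specific polynomial}.
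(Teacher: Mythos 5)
Your proof is correct and takes essentially the same route as the paper: both translate the order-$k$ hypothesis into the orthogonality relation $\sum_{i=0}^N a_i G(i) = 0$ for every polynomial $G$ of degree less than $k$, then pit this against the polynomial produced by \Cref{Result: existence of a specific polynomial} to reach a contradiction. The only presentational differences are that you establish the orthogonality via falling factorials whereas the paper uses iterates of the operator $x\,\frac{d}{dx}$, and you derive $a_0 = 0$ and peel off a factor of $x$ iteratively whereas the paper reduces up front to $a_0 \neq 0$ by the same division.
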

\begin{proof}
    Let $P(x) = \sum_{i=0}^N a_ix^i$.
    Consider wlog that $a_0 \neq 0$. Indeed, divide $P$ by the monomial $x$ until the first coefficient of the resulting polynomial is not 0. Note that this operator  does not change the order of root at 1.
    We claim that for all polynomials $F$ of degree $(k-1)$, we have 
    \begin{equation}
    \label{equation: polynomial coefficents}
        \sum_{i=0}^N a_iF(i) = 0.
    \end{equation}
    Indeed, let $F(x) = \sum_{j=0}^{k-1} b_jx^j$. Then,
    \begin{align*}
        \sum_{i=0}^N a_iF(i) = \sum_{i=0}^N a_i \sum_{j=0}^{k-1} b_j i^j = \sum_{j=0}^{k-1} b_j \sum_{i=0}^N a_i i^j \,.
    \end{align*}
    Therefore, it is enough to show that if $j < k$, then $\sum_{i=0}^N a_i i^j = 0$.
    Consider the operator $\left (x\frac{d \cdot}{dx} \right )$. Observe that 
    \[
        x\frac{dP}{dx} = \sum_{i=0}^N ia_ix^i\,.
    \]
    Therefore, by applying this operator $j$ times on $P$, we get a polynomial $Q \defas \sum_{i=0}^N i^ja_ix^i$.
    Note that $Q$ has a root at 1, i.e., 
    \[
        Q(1) = \sum_{i=0}^N i^ja_i = 0\,.
    \]
    Hence, for all polynomials $F$ of degree $(k-1)$, we have 
    \[
        \sum_{i=0}^N a_iF(i) = 0\,.
    \]
    For the sake of contradiction, assume $P$ has a root at 1 of order at least 
    \[
        \left (\left \lfloor \frac{16}{7} W^{1/4} \sqrt{N} \right \rfloor + 5 \right )\,.
    \]
    On the one hand, \Cref{Result: existence of a specific polynomial} shows that there exists a polynomial~$F$ of degree $(k-1)$ such that 
    \[
        F(0) > \sum_{i=1}^N W|F(i)|\,.
    \]
    On the other hand, we know that $a_0 \neq 0$. By \Cref{equation: polynomial coefficents}, 
    \[
        \sum_{i=0}^N a_iF(i) = 0\,.
    \]
    Therefore, by rearranging and triangle inequality, we get
    \begin{align*}
        F(0) &\le |a_0||F(0)| \le \sum_{i=1}^N |a_iF(i)| \le W\sum_{i=1}^N |F(i)|\,, & (|a_i| \le W)
    \end{align*}
    which contradicts with the property of polynomial $F$. Therefore, the order of root at 1 of $P$ is at most 
    \[
        \left \lfloor \frac{16}{7} W^{1/4} \sqrt{N} \right \rfloor + 4\,,
    \]
    which concludes the proof.
\end{proof}

\begin{proof}[Proof of \Cref{Result: bound on the roots of polynomials}-(\ref{Result: upper bound on the roots of polynomials})]
    By \Cref{Result: bound on order of root at 1}, we know that $P$ has a root at 1 of order at most 
    \[
        \left ( \left \lfloor \frac{16}{7} W^{1/4} \sqrt{N} \right \rfloor + 4 \right )\,.
    \]
    By \Cref{Result: bound on the roots of polynomials with order}, for all roots $\alpha \neq 1$ of $P$, we have
    \begin{align*}
            |1 - \alpha| &> \frac{\left (\left \lfloor \frac{16}{7} W^{1/4} \sqrt{N} \right \rfloor + 5 \right )!}{2W(N+1)^{\left \lfloor \frac{16}{7} W^{1/4} \sqrt{N} \right \rfloor + 6}} & (\text{\Cref{Result: bound on the roots of polynomials with order}})\\
            &> \frac{\left \lfloor \frac{16}{7} W^{1/4} \sqrt{N} \right \rfloor!}{2W(N+1)^{\frac{16}{7} W^{1/4} \sqrt{N} + 6}}\,,
    \end{align*}
    which concludes the proof.
\end{proof}

\subsection{Upper bounds}\label{section: upper bound poly} 
In this section, we show two upper bounds on the roots of polynomials with integer coefficients (\Cref{Result: bound on the roots of polynomials}-(\ref{Result: constructive lower bound on the roots of polynomials},\ref{Result: nonconstructive lower bound on the roots of polynomials})). 
Given a polynomial with a root of order $k$ at 1, we construct a new polynomial with a root close to 1 depending on $k$ (\Cref{Result: lower bound on the roots of polynomials given a polynomial}).
For constructive upper bound, we present an explicit polynomial, and for nonconstructive upper bound, we use the existence of a polynomial with a root at 1 of higher order than the previous case.

\begin{lemma}
\label{Result: lower bound on the roots of polynomials given a polynomial}
    Consider a polynomial $F$ with $\{-1, 0, +1\}$ coefficients such that $F(x) = (x - 1)^k f(x)$, where $k$ is an integer greater or equal to 9 and $f$ is a polynomial with integer coefficients such that $f(1) \neq 0$.
    For every positive integer $d \ge \left (\mathfrak{d}(F) + 2 \right )^{3/2}$, there exists a polynomial $P$ of degree $\left (d \left (\mathfrak{d}(F) + 2 \right ) - 1 \right )$ with $\{-2, -1, 0, +1, +2\}$ coefficients and a root $\alpha < 1$ such that 
    \[
        |1 - \alpha| \le 2 \cdot d^{-(k+2)}\,.
    \]
\end{lemma}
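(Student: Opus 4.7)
Let $D = \mathfrak{d}(F)$. The plan is to construct an explicit polynomial of the form
\[
P(x) \;\defas\; \sum_{j=0}^{d-1} \varepsilon_j\, x^{j(D+2)} F(x) \;+\; \varepsilon\, x^{d(D+2)-1},
\]
for suitable signs $\varepsilon_j, \varepsilon \in \{-1, 0, +1\}$, and then to locate a real root of $P$ near $1$ by a local analysis. Because $\deg F = D < D+2$, the shifted blocks $x^{j(D+2)} F(x)$ occupy pairwise disjoint position ranges $[j(D+2),\, j(D+2)+D]$ separated by two empty positions, and the trailing monomial lies in the gap just beyond the last block. Consequently, $P$ has coefficients in $\{-1,0,+1\} \subseteq \{-2,-1,0,+1,+2\}$ and degree exactly $d(D+2)-1$, as required.

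To force $P$ to vanish to high order at $x = 1$, I tune the sign sequence. Using $x^{j(D+2)} = \sum_t \binom{j(D+2)}{t}(x-1)^t$, the Taylor coefficient $c_r = P^{(r)}(1)/r!$ becomes a linear combination of the moments $\sum_{j=0}^{d-1}\varepsilon_j\, j^q$ for $0 \le q \le r$, together with a $\binom{d(D+2)-1}{r}$ contribution from the trailing monomial. Imposing the $k+1$ moment-killing equations
\[
\sum_{j=0}^{d-1} \varepsilon_j\, j^q = 0, \qquad q = 0, 1, \dots, k,
\]
forces every block-contribution to $c_k, \dots, c_{2k}$ to vanish, and $\varepsilon$ is used to cancel any residual integer at $c_{2k}$. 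These $k+1$ linear conditions in $d$ integer sign variables are solvable over $\{-1,0,+1\}$ because $d \ge (D+2)^{3/2}$ vastly exceeds $k+1$ (noting that by \Cref{Result: bound on order of root at 1} applied with $W=1$, the hypothesis $k \ge 9$ already forces $D$ to be sufficiently large in terms of $k$). After this tuning, $P(x) = (x-1)^{2k+1} R(x)$ with $R \in \mathbb{Z}[x]$.

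I now locate a real root of $R$ by a Newton-style estimate. The leading-order expansions
\[
R(1) = c_{2k+1} \;\approx\; f(1)\,\frac{(D+2)^{k+1}}{(k+1)!}\, S, \qquad R'(1) = c_{2k+2} \;\approx\; f(1)\,\frac{(D+2)^{k+2}}{(k+2)!}\, S',
\]
with $S = \sum_j \varepsilon_j\, j^{k+1}$ and $S' = \sum_j \varepsilon_j\, j^{k+2}$, together with a choice of $\varepsilon_j$ minimising $|S|$ under the moment constraints while forcing the required size of $|S'|$ via the amplified contribution from the outermost index $j = d-1$, yield
\[
\left| \frac{R(1)}{R'(1)} \right| \;\approx\; \frac{(k+2)\,|S|}{(D+2)\,|S'|} \;\le\; 2\, d^{-(k+2)},
\]
where the last step uses $d \ge (D+2)^{3/2}$ and $k \ge 9$ to absorb constants. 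The linear approximation $R(1-\epsilon) = R(1) - R'(1)\epsilon + O(\epsilon^2)$ combined with this ratio bound shows that $R$ changes sign on $[1 - 2 d^{-(k+2)},\, 1]$; the intermediate value theorem then produces a real root $\alpha \in (1 - 2 d^{-(k+2)},\, 1)$ of $R$, which is a fortiori a root of $P$ with $\alpha < 1$, completing the proof.

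The main obstacle is the simultaneous control of $|S|$ and $|S'|$: the sign pattern must kill the first $k+1$ moments yet leave an imbalance at the $(k+2)$-nd moment that is large enough to make the Newton ratio small. The hypothesis $d \ge (D+2)^{3/2}$ is exactly calibrated to provide enough free sign variables to solve the moment system while preserving the required $|S'|$, and $k \ge 9$ ensures the constant-factor estimates from the Vandermonde-style solvability argument carry through.
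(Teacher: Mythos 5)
Your construction is genuinely different from the paper's, but it has two gaps that the paper's construction is engineered to avoid.

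\textbf{The trailing monomial breaks vanishing at $1$.} The monomial $\varepsilon\,x^{d(D+2)-1}$ contributes $\varepsilon\binom{d(D+2)-1}{r}$ to \emph{every} Taylor coefficient $c_r = P^{(r)}(1)/r!$, not just $c_{2k}$. In particular $c_0 = \varepsilon$, so if $\varepsilon\neq 0$ then $P(1)\neq 0$ and the claimed factorisation $P(x) = (x-1)^{2k+1}R(x)$ is impossible. Taking $\varepsilon = 0$ restores the vanishing (then $P = F(x)\,G(x^{D+2})$ with $G(y)=\sum_j \varepsilon_j y^j$), but the degree drops to $d(D+2)-2$ rather than the $d(D+2)-1$ that the lemma promises. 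This is repairable, but the write-up as stated is inconsistent.

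\textbf{The Newton ratio is uncontrolled.} The decisive claim is $\left|R(1)/R'(1)\right| \le 2d^{-(k+2)}$. With $R(1)\approx f(1)\frac{(D+2)^{k+1}}{(k+1)!}S$ and $R'(1)\approx f(1)\frac{(D+2)^{k+2}}{(k+2)!}S'$ this amounts to
\[
\frac{|S'|}{|S|} \;\gtrsim\; \frac{(k+2)}{D+2}\,d^{k+2}\,,
\]
where $S=\sum_j\varepsilon_j j^{k+1}$ and $S'=\sum_j\varepsilon_j j^{k+2}$. Since $|S|\ge 1$ when $S\neq 0$ and $|S'|\le d^{k+3}$ trivially, this demands $|S'|$ near its absolute maximum while $|S|$ is near its minimum. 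The pigeonhole argument that produces signs solving $\sum_j\varepsilon_j j^q = 0$ for $q=0,\dots,k$ gives no control whatsoever over the next two moments $S$ and $S'$: they could just as well satisfy $|S|\asymp|S'|$, in which case your Newton step only locates a root at distance $\Theta\left((k+2)/(D+2)\right)$ from $1$, nowhere near $d^{-(k+2)}$. The sentence ``a choice of $\varepsilon_j$ minimising $|S|$\dots while forcing the required size of $|S'|$'' is precisely the missing argument, and it is not clear that such a choice exists within the $\{-1,0,+1\}$ constraint and the $k+1$ linear equations. You would also need explicit bounds on $R''$ and higher derivatives to upgrade the one-step Newton heuristic into an actual sign change on the interval; the paper's proof carries out exactly this kind of residue estimate.

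The paper sidesteps both problems by taking $H(x) = \widehat{F}(x^d)\frac{1-x^d}{1-x} + F(x)$ with $\widehat{F}(x)=(x-1)F(x)$. The first summand equals $(x-1)^{k+1}(1+x+\cdots+x^{d-1})^{k+2}f(x^d)$, which at $x=1$ evaluates to $(x-1)^{k+1}\bigl[d^{k+2}f(1) + O(x-1)\bigr]$. This places a \emph{guaranteed} large linear term $(x-1)\,d^{k+2}f(1)$ inside the bracket $H(x)/(x-1)^k$, with no sign system to solve and no moments to tune; the sign change at $1 - 2d^{-(k+2)}$ then falls out of a short Taylor computation. Your approach would work only if you could prove a quantitative version of the moment-control claim, and I do not see how to do that with the pigeonhole tools available.
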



\begin{proof}
    We define polynomial $\widehat{F}(x) \defas (x - 1)F(x)$. We claim that the desired polynomial is given by
    \[
         H(x) \defas \widehat{F} \left (x^d \right ) \left (\frac{1 - x^d}{1 - x} \right ) + F(x) \,.
    \]
    We define 
    \[
        N \defas d(\mathfrak{d}(F) + 1) + d - 1\,.
    \]
    Note that $H$ is a polynomial of degree $N$ with $\{-2, -1, 0, +1, +2\}$ coefficients and has a root of order $k$ at 1. 
    We show that $H$ has a root $\alpha < 1$ such that 
    \[
        |1 - \alpha| \le \frac{2}{d^{k+2}}\,.
    \]
    We take Taylor expansion of $H$ around 1 and get
    \[
        H(x) = (x - 1)^k \left [ f(1) + (x - 1)f^1(1) + (x - 1)d^{k+2} f(1) + E(x)  \right ] \,,
    \]
    where $E(x)$ is the residue polynomial. 
    It is enough to show that $H$ changes its sign in the interval 
    \[
        \left (1 - \frac{2}{d^{k+2}}, 1 \right )\,,
    \]
    which implies the existence of the desired root. 
    The three key terms are $E(x)$, $f(1)$, and $f^1(1)$. We bound these three terms to deduce the existence of the desired root.
    Note that 
    \[
        E(x) = \sum_{j=k+2}^{N} (x - 1)^{j-k} \frac{H^j(1)}{j!}\,
    \]
    Therefore, for $|x - 1| \le \frac{1}{N+1}$, we have
    \begin{align*}
        |E(x)| &\le \sum_{j=k+2}^{N} |x - 1|^{j-k} \frac{|H^j(1)|}{j!} & (\text{triangle inequality})\\
        &\le \sum_{j=k+2}^{N} |x-1|^{j-k} \frac{2(N+1)^{j+1}}{j!} &  \left (|H^j(1)| \le 2(N+1)^{j+1} \right )\\
        &\le 2|x - 1|^2 (N+1)^{k+3} \sum_{j=k+2}^{N} \frac{1}{j!} & \left (|x - 1| \le \frac{1}{N+1} \right )\\
        &\le \frac{4(N+1)^{k+3}}{(k+2)!}|x - 1|^2\,.  & \left (\sum_{j=k+2}^N \frac{1}{j!} < \frac{2}{(k+2)!} \right )
    \end{align*}
    We denote the sign function by $\text{sign}(x)$ defined by
    \[
    \sign(x) =
    \begin{cases}
    -1 & \text{if } x < 0, \\
    0 & \text{if } x = 0, \\
    1 & \text{if } x > 0.
    \end{cases}
    \]
    Without loss of generality, we consider $f(1) \ge 1$ (recall that $f$ is a polynomial with integer coefficients). We define 
    \[
        \beta \defas 1 - \frac{2}{d^{k+2}}\,.
    \]
    We claim that 
    \[
        \sign \left (H (\beta) \right ) = -\lim_{x \to 1^{-}} \sign(H(x))\,.
    \]
    It is enough to show that 
    \begin{align*}
        \lim_{x \to 1^{-}} &\sign \left (f(1) + (x - 1) f^1(1) + (x - 1)d^{k+2}f(1) + E(x) \right ) \\
        &= -\sign \left (f(1) + (\beta - 1)f^1(1) + (\beta - 1)d^{k+2}f(1) + E(\beta)  \right ) \,.
    \end{align*}
    Indeed, for the LHS, the term $f(1)$ is dominating. Hence, we get
    \begin{equation}
    \label{Eq: RHS of Sign of H}
        \lim_{x \to 1^{-}} \sign \left ( f(1) + (x - 1) f^1(1) + (x - 1)d^{k+2}f(1) + E(x) \right ) = 1\,.
    \end{equation}
    For the RHS, we have 
    \begin{align*}
        f(1) + (&\beta - 1)f^1(1) + (\beta - 1)d^{k+2}f(1) + E(\beta) \\
        &= \frac{-2f^1(1)}{d^{k+2}} - f(1) + E(\beta)\\
        &\le \frac{-2f^1(1)}{d^{k+2}} - f(1) + \frac{4(N+1)^{k+3}}{(k+2)!} (1 - \beta)^2\\
        &= \frac{-2f^1(1)}{d^{k+2}} - f(1) + \frac{16(N+1)^{k+3}}{(k+2)!\,d^{2k+4}}\\
        &\le \frac{-2f^1(1)}{d^{k+2}} - 1 + \frac{16(N+1)^{k+3}}{(k+2)!\,d^{2k+4}}\\
        &\le \frac{2(\mathfrak{d}(F))^{k+2}}{d^{k+2}(k+1)!} - 1 + \frac{16(N+1)^{k+3}}{(k+2)!\,d^{2k+4}}\\
        &\le \frac{2}{(k+1)!} - 1 + \frac{16(N+1)^{k+3}}{(k+2)!\,d^{2k+4}}\\
        &\le \frac{2}{(k+1)!} - 1 + \frac{16}{(k+2)!}\\
        &< 0\,, 
        \stepcounter{equation}\tag{\theequation}\label{Eq: LHS of Sign of H}
    \end{align*}
    where in the first equality we use the definition of $\beta$, in the first inequality we use 
    \[
        E(\beta) \le \frac{4(N+1)^{k+3}}{(k+2)!}(\beta - 1)^2\,,
    \]
    in the second equality we again use the definition of $\beta$, in the second inequality $f(1) \ge 1$, in the third inequality we use 
    \[
        |f^1(1)| = \frac{|F^{k+1}(1)|}{(k+1)!} \le \frac{(\mathfrak{d}(F)^{k+2})}{(k+1)!}\,,
    \]
    in the fourth inequality we use $\mathfrak{d}(F) \le d$, in the fifth equality we use $(N+1)^{3/5} \le d$ and $9 \le k$, and in the sixth inequality we again use $9 \le k$. Therefore, we have
    \[
        \sign\left (f(1) + (\beta - 1)f^1(1) + (\beta - 1)d^{k+2}f(1) + E(\beta)  \right ) = -1\,.
    \]
    By combining \Cref{Eq: RHS of Sign of H,Eq: LHS of Sign of H}, we get
    \[
        \lim_{x \to 1^{-}} \sign(H(x))= - \sign \left (H (\beta) \right ) \,.
    \]
    Hence, by continuity of $H$, there exists a root $\alpha$ of polynomial $H$ such that 
    \[
        |\alpha - 1| \le \frac{2}{d^{k+2}}\,,
    \]
    which concludes the proof.
\end{proof}

\begin{proof}[Proof of \Cref{Result: bound on the roots of polynomials}-(\ref{Result: constructive lower bound on the roots of polynomials})]
    Fix 
    \[ 
        M \defas \left \lfloor N^{2/5} \right \rfloor - 3,\quad d \defas \left \lceil (M+2)^{3/2} \right \rceil\,.
    \]
    We define the polynomial 
    \[
        P \defas x^{M - 2^{\lfloor \log M \rfloor} + 1}\prod_{i=0}^{\lfloor \log M \rfloor - 1} \left (x^{2^{i}} - 1 \right ) \,.
    \] 
    Note that $P$ is a polynomial of degree $M$ with $\{-1, 0, +1\}$ coefficients and has a root of order $\lfloor \log M \rfloor$ at 1. By \Cref{Result: lower bound on the roots of polynomials given a polynomial}, there exists a polynomial $Q$ of degree $d(M+2) - 1$ with $\{-2, -1, 0, +1, +2\}$ coefficients and a root $\alpha < 1$ such that
    \begin{align*}
        1 - \alpha &\le \frac{2}{d^{\left \lfloor \log M \right \rfloor + 2}} \le \frac{2}{d^{\left \lfloor \log M \right \rfloor}} & (\text{\Cref{Result: lower bound on the roots of polynomials given a polynomial}})\\
        &\le \frac{2}{(M+2)^{\frac{3}{2}\lfloor \log M \rfloor}} & \left (d = \left \lceil (M+2)^{3/2} \right \rceil \right )\\
        &\le \frac{2}{\left ( \left \lfloor N^{2/5} \right \rfloor - 1 \right )^{\frac{3}{2}\left \lfloor \log  \left (\lfloor N^{2/5} \rfloor - 3 \right ) \right \rfloor}}\,,& \left (M =  \left \lfloor N^{2/5} \right \rfloor - 3 \right )
    \end{align*}
    which yields the result.
\end{proof}

For the nonconstructive upper bound, we recall a result of \cite{borwein1999littlewood} on the existence of polynomials with high order roots at 1.
\begin{lemma}[\protect{\cite[Theorem~2.7]{borwein1999littlewood}}]
\label{Result: existence of polynomials with high order roots at 1}
    For a positive integer $N \ge 9$, there exists a polynomial $P$ of degree $N$ with $\{-1, 0, +1\}$ coefficients such that $P$ has a root at 1 of order 
    \[
        \left (\left \lfloor \sqrt{\frac{N}{\log N}} \right \rfloor - 2 \right )\,.
    \]
\end{lemma}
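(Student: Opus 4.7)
The plan is a double-counting (pigeonhole) argument applied to $\{0,1\}$-valued polynomials of degree at most $N$, combined with the moment characterization of a high-order root at $1$ that already appeared in the proof of \Cref{Result: bound on order of root at 1}.

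Set $k \defas \lfloor \sqrt{N/\log N} \rfloor - 2$. Recall from the operator identity $x\tfrac{d}{dx}$ used earlier that a polynomial $Q(x)=\sum_{i=0}^N c_i x^i$ has a root of order at least $k$ at $1$ if and only if its first $k$ power-moments vanish, i.e., $\sum_{i=0}^N c_i\, i^j = 0$ for all $j = 0, 1, \ldots, k-1$. So I would first associate, to each $\{0,1\}$-valued sequence $\mathbf{a} = (a_0, \ldots, a_N)$, the moment vector
\[
\Phi(\mathbf{a}) \defas \left( \sum_{i=0}^N a_i\, i^j \right)_{j=0}^{k-1} \in \ZZ^k \,.
\]
There are $2^{N+1}$ such sequences, and each coordinate of $\Phi(\mathbf{a})$ is a non-negative integer bounded by $\sum_{i=0}^N i^j \le (N+1)^{j+1}$. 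Hence the image of $\Phi$ has cardinality at most $\prod_{j=0}^{k-1}\bigl((N+1)^{j+1}+1\bigr) \le 2^k (N+1)^{k(k+1)/2}$.

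Next, if the pigeonhole inequality $2^{N+1} > 2^k (N+1)^{k(k+1)/2}$ holds, then two distinct $\{0,1\}$-valued sequences $\mathbf{a}\neq\mathbf{b}$ must share the same moment vector. Setting $c_i \defas a_i - b_i \in \{-1,0,+1\}$ yields a nonzero polynomial $Q(x)=\sum c_i x^i$ with $\{-1,0,+1\}$ coefficients, of some degree $N' \le N$, and with a root of order at least $k$ at $1$ by the moment characterization. To force the degree to be exactly $N$, multiply by $x^{N-N'}$; this keeps the coefficients in $\{-1,0,+1\}$ and preserves the order of the root at $1$ (since $1^{N-N'}=1$).

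The main technical step is verifying the pigeonhole inequality for the claimed value of $k$. Taking logarithms, it reduces to $(N+1)\log 2 > k\log 2 + \tfrac{k(k+1)}{2}\log(N+1)$. Since $k+2 \le \sqrt{N/\log N}$, one gets $k(k+1) \le k^2 + k \le N/\log N + \sqrt{N/\log N}$, so the dominant right-hand term is at most $\tfrac{N\log(N+1)}{2\log N} \approx \tfrac{N}{2}$, comfortably smaller than $(N+1)\log 2 \approx 0.693\,N$ for large $N$; the finitely many small cases $9 \le N \le N_0$ are checked by direct calculation, which is what fixes the ``$-2$'' offset and the hypothesis $N \ge 9$. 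The main obstacle --- and the reason the bound is $\sqrt{N/\log N}$ rather than $\sqrt{N}$ --- is that the codomain of $\Phi$ contains coordinates of sizes roughly $N, N^2, \ldots, N^k$, so its cardinality scales like $N^{k^2/2}$; trading this against $2^{N+1}$ is exactly what produces the $\log N$ in the denominator under the square root.
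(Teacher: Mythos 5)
Your proposal is correct and is essentially the paper's own argument: a pigeonhole count over the $2^{N+1}$ polynomials with $\{0,1\}$ coefficients, mapped to a $k$-tuple of integer invariants (you use the power moments $\sum_i a_i i^j$, the paper uses the derivatives $Q^{(j)}(1)$ --- these are linearly equivalent), with the same cardinality bound $(N+1)^{k(k+1)/2}$ up to harmless constant factors, and the difference of two colliding polynomials giving the desired $\{-1,0,+1\}$ polynomial. Your added care about padding by $x^{N-N'}$ to force degree exactly $N$ is a detail the paper glosses over, but it does not change the approach.
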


\begin{proof}
    We proceed with the proof by pigeonhole principle. Let $A_N$ be the set of all polynomials of degree at most $N$ with $\{0, 1\}$ coefficients. Given a positive integer $k$, for all polynomials $Q \in A_N$, we define the mapping $Q \mapsto \left (Q(1), Q^1(1), \ldots, Q^{k-1}(1) \right )$, where $Q^j(1)$ is the $j$-th derivative of $Q$ at 1. On one hand, the number of different outputs of the mapping is at most 
    \begin{align*}
        &\prod_{j = 0}^{k-1} (N+1)^{j+1} & (\textsc{\Cref{Eq: Bound on the derivative of polynomial}})\\
        &= (N+1)^{k(k+1)/2}\,.
    \end{align*}
    On the other hand, the number of polynomials in $A_N$, i.e., $|A_N|$ is $2^{N+1}$. Therefore, if 
    \[
        (N+1)^{k(k+1)/2} < 2^{N+1}\,,
    \]
    then there exist two polynomials $Q_1, Q_2 \in A_N$ such that for all $0 \le j < k$, we have $Q_1^j(1) = Q_2^j(1)$, which implies that the polynomial $Q_1 - Q_2$ has a root at 1 of order at least $k$. Observe that the coefficients of the polynomial $Q_1 - Q_2$ belong to the set $\{-1, 0, +1\}$. By setting $k = \left (\left \lfloor \sqrt{\frac{N}{\log N}} \right \rfloor - 2 \right )$, we obtain $(N+1)^{k(k+1)/2} < 2^{N+1)}$, which completes the proof.
\end{proof}

\begin{proof}[Proof of \Cref{Result: bound on the roots of polynomials}-(\ref{Result: nonconstructive lower bound on the roots of polynomials})]
    Fix 
    \[
        M \defas \left \lfloor N^{2/5} \right \rfloor - 3,\quad d \defas \left \lceil (M+2)^{3/2} \right \rceil\,
    \]
    By \Cref{Result: existence of polynomials with high order roots at 1}, there exists a polynomial $P$ of degree $M$ with $\{-1, 0, +1\}$ coefficients and a root at 1 of order 
    \[
        \left (\left \lfloor \sqrt{\frac{M}{\log M}} \right \rfloor - 2 \right )\,.
    \]
    By \Cref{Result: lower bound on the roots of polynomials given a polynomial}, there exists a polynomial $Q$ of degree $d(M+2) - 1$ with $\{-2, -1, 0, +1, +2\}$ coefficients and a root $\alpha < 1$ such that
    \begin{align*}
        1 - \alpha &\le \frac{2}{d^{\left \lfloor \sqrt{\frac{M}{\log M}} \right \rfloor}} & (\text{\Cref{Result: lower bound on the roots of polynomials given a polynomial}})\\
        &\le \frac{2}{ (M+2)^{\frac{3}{2} \left \lfloor \sqrt{\frac{M}{\log M}} \right \rfloor}} & \left (d = \left \lceil (M+2)^{3/2} \right \rceil \right )\\
        &\le \frac{2}{ \left ( \left \lfloor N^{2/5} \right \rfloor - 1 \right )^{\frac{3}{2}\left \lfloor \sqrt{\frac{\left \lfloor N^{2/5} \right \rfloor - 3}{\log (N^{2/5})}} \right \rfloor}} \,, & \left (M =  \left \lfloor N^{2/5} \right \rfloor - 3 \right )
    \end{align*}
    which concludes the proof.
\end{proof}

\section{Improved Analysis of Strategy Iteration Algorithm for \DSCV}
\label{section: si}
In this section, we first define the basic notions related to the strategy iteration algorithm (\Cref{section: si definitions}), then present the procedure \SI\ (\Cref{section: si algorithm}), and finally, we analyze the time complexity of the algorithm (\Cref{section: si analysis}).


\subsection{Basic Notions}\label{section: si definitions}

\noindent{\bf Best-response to a strategy.}
Given a discounted-payoff game $G$ with discount factor $\discfac$ and player-1 strategy $\strategyone$, the \emph{best-response} to $\strategyone$ for player~2 is a strategy $\strategytwo$ such that for all strategies $\strategytwo' \in \Strategytwo^P$ and all vertices $\vertex$, we have 
\[
    \discounted_\discfac \left (G_\vertex^{\strategyone, \strategytwo} \right ) \le \discounted_\discfac \left (G_\vertex^{\strategyone, \strategytwo'} \right )\,.
\]
The existence of the best-response strategies follows from~\Cref{Result: determinacy of tbgs}. 

\smallskip\noindent{\bf Bellman strategy extractor.} 
Given a discounted-payoff game $G$ with discount factor $\discfac$ and a function $f \colon \Vertices \to \RR$, Bellman strategy extractor is defined as follows:
\[
    \B_{G, \discfac}(f)(\vertex) = 
    \begin{cases}
    \arg \max_{\otherver \in \Edges(\vertex)} \reward(\vertex, \otherver) + \lambda f(\otherver),& \text{if } \vertex \in \Vertices_1\\
    
    \arg \min_{\otherver \in \Edges(\vertex)} \reward(\vertex, \otherver) + \lambda f(\otherver),& \text{if } \vertex \in \Vertices_2
    \end{cases}
\]
We assume that the ties are resolved independently of discount factor, e.g., given a fixed indexing of vertices, choosing the vertex with the least index.

\smallskip\noindent{\bf Polynomials for a strategy profile.} Given a discounted-payoff game, a vertex $\vertex$, and a strategy profile $(\strategyone, \strategytwo)$, we define a pair of polynomials $(P, Q)$ with integer coefficients such that $\frac{P(\discfac)}{Q(\discfac)}$ is the discounted value of play $G_\vertex^{\strategyone, \strategytwo}$ when the discount factor is $\discfac$. Given strategies $\strategyone$ and $\strategytwo$, the lasso-shaped play $G_\vertex^{\strategyone, \strategytwo}$ consists in a simple path $\calP \defas \langle \vertex_0, \ldots, \vertex_{p-1} \rangle$ and a cycle $\C \defas \langle \vertex_p, \ldots, \vertex_{p + c - 1} \rangle$ repeated forever. Then, 
\begin{align*}
    \discounted_\discfac(G_\vertex^{\strategyone, \strategytwo}) &= \sum_{i=0}^{\infty} \discfac^i \reward(\vertex_i, \vertex_{i+1})\\
    &= \sum_{i=0}^{p-1} \discfac^i \reward(\vertex_i, \vertex_{i+1}) + \sum_{i=p}^{\infty} \discfac^{i} \reward (\vertex_{i}, \vertex_{i+1})\\
    &= \sum_{i=0}^{p-1} \discfac^i \reward(\vertex_i, \vertex_{i+1}) + \frac{\sum_{i=0}^{c-1} \discfac^{p+i} \reward (\vertex_{p + i}, \vertex_{p + i+1})}{1 - \discfac^{c}}\\
    &= \frac{(1 - \discfac^c) \sum_{i=0}^{p-1} \discfac^i \reward(\vertex_i, \vertex_{i+1}) + \sum_{i=0}^{c-1} \discfac^{p+i} \reward (\vertex_{p + i}, \vertex_{p + i+1})}{1 - \discfac^c} \\
    &\sadef \frac{P(\discfac)}{Q(\discfac)} \,,
\end{align*}
where in the first equality we use the definition of $\discounted_\discfac$, in the second equality we partition the play into the simple path and the cycle, in the the third equality we use geometric sum, and in the the fourth we use algebraic manipulation.
Notice that the coefficients of $P$ and $Q$ are integers and bounded by $3W$. The degrees of $P$ and $Q$ are at most $n$.

\begin{example}
     \Cref{Figure: Turn-based game example} illustrates a turn-based game $G$ with three vertices: two player-1 vertices $a, b$ and one player-2 vertex $c$. The directed edges among them represent possible actions with associated weights: $1$ from $a$ to $b$, $3$ from $b$ to $c$, and $-2$ from $c$ to $b$. The initial vertex is $a$. Since each vertex has one possible action, there exist a positional strategy $\strategyone$ for player~1 and a positional strategy $\strategytwo$ for player~2. Given the strategy profile $(\strategyone, \strategytwo)$, the lasso-shaped play $G_a^{\strategyone, \strategytwo} = \langle a, b, c, b, c, \cdots \rangle$ consists in the simple $\calP = \langle a \rangle$ and the cycle $\C = \langle b, c \rangle$ repeated forever. For all $\discfac$, we have
    \[
        \discounted_\discfac(G_a^{\strategyone, \strategytwo}) = 1 + \frac{3\discfac - 2\discfac^2}{1 - \discfac^2} = \frac{1 + 3\discfac - 3\discfac^2}{1 - \discfac^2} = \frac{P(\discfac)}{Q(\discfac)}\,.
    \]
\end{example}

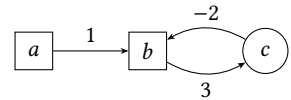
\begin{figure}[h]
\begin{tikzpicture}[
    node distance = 1cm,
    max/.style={draw, rectangle, minimum size=0.5cm, text centered},
    min/.style={draw, circle, minimum size=0.6cm, text centered},
]
\centering

\node (stmax) [max] {$a$};
\node (max) [max, right=of stmax] {$b$};
\node (min) [min, right=of max] {$c$};

\path (stmax) edge[->] node[above] {$1$} (max); 
\path (max) edge[->, bend right] node[below] {$3$} (min); 
\path (min) edge[->, bend right] node[above] {$-2$} (max); 

\end{tikzpicture}
\caption{A turn-based game.}
\label{Figure: Turn-based game example}
\end{figure}

\subsection{Strategy Iteration Algorithm}
\label{section: si algorithm}
Strategy iteration for turn-based games computes the optimal strategy for player~1. 
It starts with an arbitrary strategy $\strategyone^0$. Then, at each iteration, it locally improves player-1 strategy. 
Starting with strategy $\strategyone^k$ at iteration $k$, it computes the best-response to $\strategyone^k$ for player~2, called $\strategytwo^k$. 
This computation can be done in time $\calO(mn^2 \log m)$ by linear programming \cite{andersson2006improved}. 
Then, it improves the strategy $\strategyone^k$ by greedy local improvements using Bellman strategy extraction $\B_{G, \discfac}$. 
This procedure is guaranteed to reach a fixed point, which is an optimal strategy for player~1. 
The formal description is shown in \Cref{algorithm: strategy iteration}.
\begin{algorithm}[ht]
  \caption{Strategy Iteration}
  \label{algorithm: strategy iteration}
  \begin{algorithmic}[1]
    \Require Game $G$, discount factor $\discfac$, strategy $\strategyone^0$ for player 1
    \Ensure Optimal strategy $\strategyone^*$ for player 1
 
    \Procedure{\SI}{$G, \discfac, \strategyone^0$}
      \State $k \gets 0$
      \Repeat
        \State $\strategytwo^k \gets \text{the best response to } \strategyone^k$
        \State $\strategyone^{k+1} \gets \B_{G, \discfac}\left (\discounted_\discfac \left (G^{\strategyone^k, \strategytwo^k} \right ) \right )$
        \State $k \gets k + 1$
      \Until{$\strategyone^k = \strategyone^{k-1}$}
      \State \Return $\strategyone^k$
    \EndProcedure
  \end{algorithmic}
\end{algorithm}

\subsection{Improved Time Complexity Analysis}
\label{section: si analysis}
In this section, we analyze the time complexity of \SI\ algorithm (\Cref{Result: time complexity of SI for TBG-WUN}). 
We show that, given a turn-based game $G$, there exists a discount factor $\discfac_0$ such that, for all discount factors $\discfac \in [\discfac_0, 1)$, if we start \SI\ from the same initial strategy for both $\discfac_0$ and $\discfac$, \SI\ generates the same sequence of strategies (\Cref{Result: equivalence of sequences of strategies}). 
This result, alongside the time complexity of \SI\ in general case (\Cref{Result: time complexity of SI for TBG-BIN}), yields an improved complexity of \SI.
To prove \Cref{Result: equivalence of sequences of strategies}, we show that if a strategy profile outperforms another when the discount factor is $\discfac_0$, then it also outperforms when the discount factor is $\discfac \in [\discfac_0, 1)$ (\Cref{Result: ordering of strategy profiles}). \Cref{Example: Two-cycle game} illustrates that the performance of two strategy profiles can be compared by their associated rationals.

\begin{example}
\label{Example: Two-cycle game}
     \Cref{Figure: Two-cycle game example} illustrates a turn-based game $G$ with nine vertices, where all vertices are player-1 vertices. The directed edges among them represent possible actions with associated weights. The initial vertex is $a$. There are two positional strategies for player~1: $\strategyone_1$ (going left from $a$) and $\strategyone_2$ (going right from $a$). Since player~2 does not have any vertices, there is one positional strategy $\strategytwo$. For all discount factors $\discfac$, we have
     \[
         \discounted_\discfac \left (G_a^{\strategyone_1, \strategytwo} \right ) = \frac{\discfac + 2\discfac^4}{1 - \discfac^4}\,, \quad
         \discounted_\discfac \left (G_a^{\strategyone_2, \strategytwo} \right ) = \frac{2\discfac^2 + \discfac^3}{1 - \discfac^4}\,.
     \]
     The difference between the discounted payoff of $\strategyone_1$ and $\strategyone_2$ is
     \begin{align*}
         \discounted_\discfac \left (G_a^{\strategyone_1, \strategytwo} \right ) - \discounted_\discfac \left (G_a^{\strategyone_2, \strategytwo} \right ) &= \frac{\discfac - 2\discfac^2 - \discfac^3 + 2\discfac^4}{1 - \discfac^4}\\
         &= \frac{\discfac(1 + \discfac)(1 - \discfac)(1 - 2\discfac)}{1 - \discfac^4}\,.
     \end{align*}
     As $\discfac$ varies in $(0, 1)$, the performance of two strategies is compared as follows.
     \begin{itemize}
         \item If $\discfac \in (0, 1/2)$, then $\strategyone_1$ outperforms $\strategyone_2$.
         \item If $\discfac = 1/2$, then the performance of $\strategyone_1$ and $\strategyone_2$ are the same.
         \item If $\discfac \in (1/2, 1)$, then $\strategyone_2$ outperforms $\strategyone_1$.
     \end{itemize}
     
\end{example}

\begin{figure}[h]
\begin{tikzpicture}[
    node distance = 1cm,
    max/.style={draw, rectangle, minimum height=0.5cm, minimum width=0.5cm, text centered},
]

\node (center) [max] {$a$};
\node (left_1) [max, left=of center] {$b$};
\node (left_2) [max, above left=of left_1] {$c$};
\node (left_3) [max, below left=of left_2] {$d$};
\node (left_4) [max, below right=of left_3] {$d$};
\node (right_1) [max, right= of center] {$e$};
\node (right_2) [max, above right= of right_1] {$f$};
\node (right_3) [max, below right= of right_2] {$g$};
\node (right_4) [max, below left= of right_3] {$h$};

\path (center) edge[->,] node[above] {$0$} (left_1);
\path (center) edge[->,] node[above] {$0$} (right_1);

\path (left_1) edge[->,] node[above] {$1$} (left_2);
\path (left_2) edge[->,] node[above] {$0$} (left_3);
\path (left_3) edge[->,] node[below] {$0$} (left_4);
\path (left_4) edge[->,] node[below] {$2$} (left_1);

\path (right_1) edge[->,] node[above] {$0$} (right_2);
\path (right_2) edge[->,] node[above] {$2$} (right_3);
\path (right_3) edge[->,] node[below] {$1$} (right_4);
\path (right_4) edge[->,] node[below] {$0$} (right_1);
\end{tikzpicture}
\caption{A turn-based game with two cycles.}
\label{Figure: Two-cycle game example}
\end{figure}
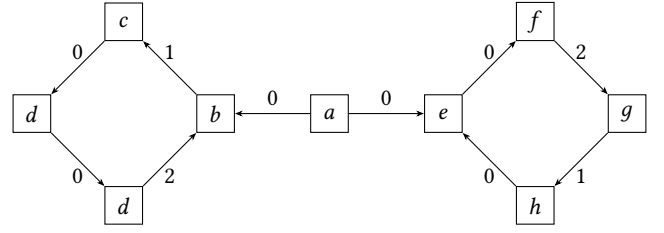

\begin{lemma}
\label{Result: ordering of strategy profiles}
    Consider a discounted-payoff game $G$, a vertex $\vertex$, and two strategy profiles $(\strategyone_1, \strategytwo_1)$ and $(\strategyone_2, \strategytwo_2)$. Fix discount factor 
    \[
        \discfac_0 \defas 1 - \left (24W(2n+1)^{7 W^{1/4} \sqrt{n} + 6} \right )^{-1}\,.
    \]
    If $\discounted_{\discfac_0}\left (G_\vertex^{\strategyone_1, \strategytwo_1} \right ) \ge \discounted_{\discfac_0}\left (G_\vertex^{\strategyone_2, \strategytwo_2} \right )$, then for all $\discfac \in [\discfac_0, 1)$, we have 
    \[
        \discounted_{\discfac}\left (G_\vertex^{\strategyone_1, \strategytwo_1} \right ) \ge \discounted_{\discfac}\left (G_\vertex^{\strategyone_2, \strategytwo_2}\right )\,.
    \]
\end{lemma}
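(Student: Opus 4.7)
The plan is to reduce the lemma to the sub-exponential lower bound in \Cref{Result: bound on the roots of polynomials}-(\ref{Result: upper bound on the roots of polynomials}). Using the rational representation from \Cref{section: si definitions}, I would write, for each $i \in \{1, 2\}$,
\[
    \discounted_\discfac \left (G_\vertex^{\strategyone_i, \strategytwo_i} \right ) = \frac{P_i(\discfac)}{Q_i(\discfac)} \,,
\]
where $Q_i(\discfac) = 1 - \discfac^{c_i}$ is strictly positive on $(0, 1)$ and $P_i$ has degree at most $n$ with integer coefficients of absolute value at most $3W$. Hence the sign of the difference $\discounted_\discfac(G_\vertex^{\strategyone_1, \strategytwo_1}) - \discounted_\discfac(G_\vertex^{\strategyone_2, \strategytwo_2})$ matches the sign of the numerator $R \defas P_1 Q_2 - P_2 Q_1$. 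Each $Q_j$ has only two nonzero coefficients of $\pm 1$, so each product $P_i Q_j$ has integer coefficients bounded by $6W$, and therefore $R$ is a polynomial of degree at most $2n$ with integer coefficients bounded by $12W$.

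Next, I would argue by contradiction. Assume $R(\discfac_0) \ge 0$ but there exists $\discfac_1 \in [\discfac_0, 1)$ with $R(\discfac_1) < 0$. The intermediate value theorem then produces a root $\discfac^* \in [\discfac_0, \discfac_1] \subseteq [\discfac_0, 1)$ of $R$. If $R$ is identically zero, both strategy profiles yield the same discounted payoff for every $\discfac$ and the conclusion is trivial; otherwise, $R$ is a nonzero polynomial, and applying \Cref{Result: bound on the roots of polynomials}-(\ref{Result: upper bound on the roots of polynomials}) with $N = 2n$ and coefficient bound $12W$ gives
\[
    |1 - \discfac^*| > \frac{1}{24 W (2n+1)^{(16/7)(12W)^{1/4}\sqrt{2n}+6}} \,.
\]

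To close the argument, I would verify that this lower bound strictly exceeds $1 - \discfac_0 = (24 W (2n+1)^{7 W^{1/4}\sqrt{n}+6})^{-1}$, which reduces to the numerical inequality $(16/7) \cdot 12^{1/4} \cdot \sqrt{2} < 7$ (the left-hand side is approximately $6.02$). Since $\discfac^* \ge \discfac_0$ implies $|1 - \discfac^*| = 1 - \discfac^* \le 1 - \discfac_0$, this contradicts the root lower bound and finishes the proof. The main obstacle is precisely this bookkeeping of constants: taking the difference of two rational functions inflates both the degree (by a factor of $2$) and the coefficient bound (by a factor of $12$), and one must confirm that the chosen threshold $\discfac_0$ with exponent $7 W^{1/4} \sqrt{n} + 6$ leaves just enough slack for the sub-exponential root bound, applied at the inflated parameters $(2n, 12W)$, to still strictly exceed $1 - \discfac_0$.
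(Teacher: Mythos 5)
Your proposal is correct and follows essentially the same route as the paper's proof: both reduce the comparison to the sign of the numerator polynomial $F = P_1Q_2 - P_2Q_1$ of degree at most $2n$ with integer coefficients bounded by $12W$, treat the identically-zero case separately, and apply \Cref{Result: bound on the roots of polynomials}-(\ref{Result: upper bound on the roots of polynomials}) together with the numerical check $\frac{16}{7}\cdot 12^{1/4}\cdot\sqrt{2} < 7$ to conclude that $F$ has no root in $[\discfac_0, 1)$ and hence cannot change sign there. The only cosmetic difference is that you phrase the final step as a contradiction via the intermediate value theorem, whereas the paper directly invokes continuity; these are the same argument.
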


\begin{proof}
    Let $(P_1, Q_1)$ (resp. $(P_2, Q_2)$) be the polynomials corresponding to a strategy profile $(\strategyone_1, \strategytwo_1)$ (resp. $(\strategyone_2, \strategytwo_2)$), and a vertex $\vertex$. We know that $\discounted_{\discfac_0}\left (G_\vertex^{\strategyone_1, \strategytwo_1} \right ) \ge \discounted_{\discfac_0}\left (G_\vertex^{\strategyone_2, \strategytwo_2} \right )$. Therefore, we have
    \[
        \frac{P_1(\discfac_0)}{Q_1(\discfac_0)} \ge \frac{P_2(\discfac_0)}{Q_2(\discfac_0)}\,,
    \]
    or equivalently,
    \[
      (P_1Q_2 - P_2Q_1)(\discfac_0) \ge 0 \,.
    \]
    We define 
    \[
        F \defas P_1Q_2 - P_2Q_1\,.
    \]
    Note $P_1$, $Q_1$, $P_2$, and $Q_2$ are polynomials of degree $n$ with integer coefficients bounded by $3W$, and $Q_1$ and $Q_2$ are of the form $1 - \lambda^{c_1}$ and $1 - \lambda^{c_2}$ where $c_1$ and $c_2$ are the length of the cycles in the lasso-shaped plays given the strategy profiles. Therefore, polynomial $F$ is of degree at most $2n$, and its coefficients are integer and are bounded by $12W$. If $F$ is identically 0, then for all $\discfac \in [0, 1)$, we have $\discounted_\discfac(G_\vertex^{\strategyone_1, \strategytwo_1}) = \discounted(G_\vertex^{\strategyone_2, \strategytwo_2})$. Otherwise,
    by \Cref{Result: bound on the roots of polynomials}-(\ref{Result: upper bound on the roots of polynomials}), for all roots $\lambda < 1$ of $F$, we have 
    \begin{align*}
        1 - \discfac &\ge \frac{\left \lfloor \frac{16}{7} (12W)^{1/4} \sqrt{2n} \right \rfloor !}{24W(2n+1)^{\frac{16}{7} (12W)^{1/4} \sqrt{2n} + 6}} & (\text{\Cref{Result: bound on the roots of polynomials}-(\ref{Result: upper bound on the roots of polynomials})}) \\
        &> \frac{1}{24W(2n+1)^{\frac{16}{7} (12W)^{1/4} \sqrt{2n} + 6}} \\
        &> \frac{1}{24W(2n+1)^{7 W^{1/4} \sqrt{n} + 6}}\,, & \left (\frac{16}{7}12^{1/4}\sqrt{2} < 7\right )
    \end{align*}
    which implies that $F$ does not have any roots in the interval $[\discfac_0, 1)$. Therefore, by continuity of $F$, for all discount factors $\discfac \in [\discfac_0, 1)$, we have 
    \[
        (P_1Q_2 - P_2Q_1)(\discfac) = F(\discfac) \ge 0\,,
    \]
    which implies that 
    \[
        \left (\frac{P_1}{Q_1} \right )(\discfac) \ge \left ( \frac{P_2}{Q_2} \right )(\discfac)\,.
    \]
    Hence, we have 
    \[
        \discounted_{\discfac}\left (G_\vertex^{\strategyone_1, \strategytwo_1} \right ) \ge \discounted_{\discfac}\left (G_\vertex^{\strategyone_2, \strategytwo_2}\right )\,,
    \]
    which concludes the proof.
\end{proof}

\begin{remark}
    If there is equality, by symmetry, one concludes that the functions of the respective profiles are equal in the interval~$[\discfac_0, 1)$.
\end{remark}

The above result shows that for any vertex $\vertex$, the ordering of strategy profiles is the same for both discount factors $\discfac_0$ and $\discfac$. \Cref{Result: ordering of strategy profiles} yields that, if \SI\ starts from the same initial strategy for both $\discfac_0$ and $\discfac$, then \SI\ outputs the same optimal strategies. 

 \begin{corollary}
 \label{Result: equivalence of sequences of strategies}
    Fix discount factor 
    \[
        \discfac_0 \defas 1 - \left (24W(2n+1)^{7 W^{1/4} \sqrt{n} + 6} \right )^{-1}\,.
    \]
    Consider a discounted-payoff game $G$ with discount factor $\discfac \in [\discfac_0, 1)$, and player-1 strategy $\strategyone^0$. 
    Let $\strategyone^k$ and $\overline{\strategyone}^k$ be the strategy for player~1 after $k$ iterations of $\, \Call{\SI}{G, \discfac_0, \strategyone^0}$ and $\, \Call{\SI}{G, \discfac, \strategyone^0}$, respectively. Then, for all $k \ge 0$, we have $\strategyone^k = \overline{\strategyone}^k$. In particular,  
    \[
        \Call{\SI}{G, \discfac_0, \strategyone^0} = \Call{\SI}{G, \discfac, \strategyone^0}\,.
    \]
 \end{corollary}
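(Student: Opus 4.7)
The plan is to prove the corollary by induction on $k$, showing that at every iteration of \SI\ both sub-steps (best-response computation and Bellman strategy extraction) produce identical outputs under $\discfac_0$ and $\discfac$.

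The base case $k = 0$ is immediate since both runs begin from the same $\strategyone^0$. For the inductive step, assume $\strategyone^k = \overline{\strategyone}^k$, and first argue that $\strategytwo^k = \overline{\strategytwo}^k$. Because $\strategyone^k$ is fixed, any two candidate player-2 strategies $\strategytwo_1, \strategytwo_2$ give rise to the full strategy profiles $(\strategyone^k, \strategytwo_i)$, and \Cref{Result: ordering of strategy profiles} (together with the equality remark immediately following it) shows that the pointwise ordering of their discounted payoffs at every vertex, including the ties, is preserved from $\discfac_0$ to $\discfac \in [\discfac_0, 1)$. With the $\discfac$-independent tie-breaking rule, the canonical argmin at each player-2 vertex therefore coincides in both runs.

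To finish the inductive step I would show that the Bellman extraction also produces the same strategy. For each $\vertex \in \Vertices_1$ the extractor picks $\arg\max_{u \in \Edges(\vertex)} r(\vertex, u) + \discfac f(u)$ with $f(u) = \discounted_\discfac(G_u^{\strategyone^k, \strategytwo^k}) = P_u(\discfac)/Q_u(\discfac)$ in the notation of \Cref{section: si definitions}. Recalling that $P_u$ has integer coefficients bounded by $3W$ and degree at most $n - 1$, while $Q_u = 1 - \discfac^{c_u}$ has coefficients in $\{-1, 0, +1\}$ and degree at most $n$, the comparison between two candidates $u_1, u_2$ has, on $[0, 1)$, the same sign as the polynomial
\[
\tilde{\Delta}(\discfac) \defas (r(\vertex, u_1) - r(\vertex, u_2)) Q_{u_1}(\discfac) Q_{u_2}(\discfac) + \discfac \left ( P_{u_1}(\discfac) Q_{u_2}(\discfac) - P_{u_2}(\discfac) Q_{u_1}(\discfac) \right ),
\]
which has degree at most $2n$ and integer coefficients bounded by $16W$. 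Feeding $N = 2n$ and this coefficient bound into \Cref{Result: bound on the roots of polynomials}-(\ref{Result: upper bound on the roots of polynomials}), and using the strict inequality $(16/7) \cdot 16^{1/4} \cdot \sqrt{2} < 7$ exactly as in the proof of \Cref{Result: ordering of strategy profiles} (the extra constant $32/24$ is absorbed by the factor $(2n+1)^{(7 - 32\sqrt{2}/7)\, W^{1/4} \sqrt{n}}$), every root $\alpha \neq 1$ of $\tilde{\Delta}$ satisfies $|1 - \alpha| > 1 - \discfac_0$. Hence $\tilde{\Delta}$ has constant sign on $[\discfac_0, 1)$, the Bellman argmax agrees for both discount factors, and the $\discfac$-independent tie-breaking rule closes the induction.

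The main obstacle is the Bellman step, which is \emph{not} directly a comparison between two strategy-profile values: the quantity $r(\vertex, u) + \discfac f(u)$ need not equal $\discounted_\discfac(G_\vertex^{\strategyone', \strategytwo^k})$ for any positional $\strategyone'$, because the continuation from $u$ under $(\strategyone^k, \strategytwo^k)$ may revisit $\vertex$ where $\strategyone^k$ disagrees with the candidate $u$. Consequently \Cref{Result: ordering of strategy profiles} cannot be used as a black box, and one must rerun the polynomial bookkeeping directly on $\tilde{\Delta}$ and verify that the slightly larger coefficient bound ($16W$ instead of $12W$) is still absorbed by the same $\discfac_0$.
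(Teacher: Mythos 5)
Your proof takes the same inductive route as the paper's, but is notably more careful on one point that the paper leaves implicit. The paper handles the best-response sub-step via \Cref{Result: ordering of strategy profiles} and then writes only ``Similarly, we can show that $\strategyone^{k+1} = \overline{\strategyone}^{k+1}$'' for the Bellman-extraction sub-step. As you correctly observe, the quantity $\reward(\vertex,u)+\discfac f(u)$ compared in the Bellman step is not in general the payoff of any strategy profile from $\vertex$: the continuation from $u$ under $(\strategyone^k,\strategytwo^k)$ can revisit $\vertex$, where $\strategyone^k(\vertex)$ need not agree with the candidate $u$. So \Cref{Result: ordering of strategy profiles} does not apply verbatim to that step, and the paper's ``similarly'' hides a genuine piece of bookkeeping. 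Your repair --- forming $\tilde{\Delta}$ directly, noting it has degree at most $2n$ and integer coefficients bounded by $16W$ (this uses that $Q_{u_1}Q_{u_2}$ has coefficients in $\{-2,\ldots,+2\}$ and each $P_{u_i}Q_{u_j}$ has coefficients bounded by $6W$), then checking $\tfrac{16}{7}\cdot 16^{1/4}\cdot\sqrt{2}<7$ and that the leftover factor $32/24$ is absorbed into $(2n+1)^{\calO(W^{1/4}\sqrt{n})}$ so the paper's $\discfac_0$ still works --- is correct, and it is exactly the computation needed to make the inductive step airtight. An equivalent alternative would be to augment the game with a fresh source vertex so the lookahead becomes a genuine strategy-profile payoff in a game with $n+1$ vertices, but your direct polynomial argument is cleaner and stays within the paper's constants.
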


 \begin{proof}
  We proceed with the proof by induction on the number of iterations. 
  
  \smallskip\noindent{\em Base case $k = 0$.} Since the initial strategy for both procedure calls is the same, then we have $\strategyone^0 = \overline{\strategyone}^0$, which completes the case.

  \smallskip\noindent{\em Induction case $k > 0$.}
  We claim that $\strategytwo^k = \overline{\strategytwo}^k$, where $\strategytwo^k$ and $\overline{\strategytwo}^k$ are the best responses to $\strategyone^k$ and $\overline{\strategyone}^k$ for player~2 with respect to $\discfac$ and $\discfac_0$, respectively. 
  Indeed, since $\strategytwo^k$ is the best response to $\strategyone^k$ for player~2, then, for all vertices $v$, we have 
  \[
    \discounted_{\discfac_0}\left (G_\vertex^{\strategyone^k, \strategytwo^k} \right ) \le \discounted_{\discfac_0}\left (G_\vertex^{\strategyone^k, \overline{\strategytwo}^k} \right )\,.
  \]
  by \Cref{Result: ordering of strategy profiles}, we have 
  \[
    \discounted_{\discfac}\left (G_\vertex^{\strategyone^k, \strategytwo^k} \right ) \le \discounted_{\discfac} \left (G_\vertex^{\strategyone^k, \overline{\strategytwo}^k} \right)\,
  \]
  The inequality also holds in the other direction since $\overline{\strategytwo}^k$ is the best response to $\overline{\strategyone}^k$. Since ties are broken independently of the discount factor, we have that $\strategytwo^k = \overline{\strategytwo}^k$. Similarly, we can show that $\strategyone^{k+1} = \overline{\strategyone}^{k+1}$, which concludes the induction case and yields the result.
 \end{proof}

\begin{remark}
    
\end{remark}

 \begin{proof}[Proof of \Cref{Result: time complexity of SI for TBG-WUN}]     
   Fix 
   \[
    \lambda_0 \defas 1 - \left (24W(2n+1)^{7 W^{1/4} \sqrt{n} + 6} \right )^{-1}\,.
   \]
   We proceed with a proof by cases on the size of $\lambda$. 
 \begin{itemize}
    \item Case $\lambda \le \lambda_0$. By \Cref{Result: time complexity of SI for TBG-BIN}, the procedure \Call{\SI}{} terminates after $n^{\calO \left ( W^{1/4} \sqrt{n}  \right )}$ iterations, which completes the case.
    \vspace{0.1em}
    \item Case $\lambda > \lambda_0$. By \Cref{Result: equivalence of sequences of strategies}, the procedure \Call{\SI}{} for discount factors $\lambda$ and $\lambda_0$ terminates after the same number of iterations. Therefore, it terminates after $n^{\calO \left ( W^{1/4} \sqrt{n}  \right )}$ iterations, which concludes the case and yields the result.
 \end{itemize}
 \end{proof}

\section{Discussion and conclusion} 
We discuss the novelty of our work and future directions.
The novelty of our work is not a new algorithm, but a new improved analysis of 
a classical and simple algorithm (\SI\ algorithm) that has been widely studied.
Moreover, most results in the literature focus on establishing lower bounds 
for the \SI\ algorithm~\cite{friedman2009super,friedmann2011subexponential,fearnley2015complexity}; in contrast we focus on a better upper bound. 
The novel aspects of our analysis are as follows: 
(a)~establishing the connection of analysis of \SI\ with a problem about lower bounds 
on roots of a class of polynomials, which is the key insight; and
(b)~establishing lower and upper bounds on roots of the required class of polynomials.
While our analysis only requires the lower bounds on roots, the significance of the upper 
bounds on roots is to show that our technique does not yield a polynomial-time bound without 
further non-trivial insights.
Our result shows that for the \TBGW\ problem we have a deterministic sub-exponential time
algorithm. Discounted-sum games lie in between mean-payoff games and stochastic games with reachability objectives,
i.e., linear-time reduction exists from mean-payoff to discounted-sum games and 
from discounted-sum games to stochastic games, but no reductions are known for the converse
direction. 
Whether discounted-sum games are more similar to mean-payoff games or stochastic games
is an intriguing question. 
When the weights are expressed in unary, mean-payoff games admit polynomial-time algorithm,
whereas stochastic games with rewards~0 and~1, and all probabilities are half are as hard
as general stochastic games.
Thus deterministic sub-exponential time bound for unary stochastic games is a major open 
question.
Our result shows the difference of unary discounted-sum games compared to unary stochastic games
by establishing deterministic sub-exponential time upper bound.
Whether there is a polynomial-time algorithm for unary discounted-sum games or there is a deterministic
sub-exponential time algorithm for unary stochastic games are interesting questions for future work.

\smallskip\noindent{\bf Acknowledgements.} This research was partially supported by the ERC CoG 863818 (ForM-SMArt) grant.

\bibliographystyle{acm}
\bibliography{refs}


\end{document}